\documentclass[a4paper,UKenglish,cleveref, autoref, thm-restate]{lipics-v2021}
\pdfoutput=1


\newcommand{\spp}{\text{Sup}}
\newcommand{\sep}{\text{Sep}}
\newcommand{\E}{\mathcal{E}}
\newcommand{\F}{\mathcal{F}}

\bibliographystyle{plainurl}

\title{Lower Bounds on Dynamic Programming for Maximum Weight Independent Set} 


\author{Tuukka Korhonen}{Department of Computer Science, University of Helsinki, Finland \and \url{https://tuukkakorhonen.com}}{tuukka.m.korhonen@helsinki.fi}{https://orcid.org/0000-0003-0861-6515}{}

\authorrunning{T. Korhonen} 

\Copyright{Tuukka Korhonen} 

\ccsdesc{Theory of computation~Graph algorithms analysis}

\keywords{Maximum weight independent set, Treewidth, Tropical circuits, Dynamic programming, Treedepth, Monotone circuit complexity} 

\category{Track A: Algorithms, Complexity and Games} 



\funding{This work has been financially supported by Academy of Finland (grant 322869).}

\acknowledgements{I wish to thank Prafullkumar Tale for suggesting to generalize the result from planar graphs to $H$-minor-free graphs. I also thank Matti J\"arvisalo, Andreas Niskanen, and anonymous reviewers for helpful comments.}

\nolinenumbers 

\hideLIPIcs  

\EventEditors{Nikhil Bansal, Emanuela Merelli, and James Worrell}
\EventNoEds{3}
\EventLongTitle{48th International Colloquium on Automata, Languages, and Programming (ICALP 2021)}
\EventShortTitle{ICALP 2021}
\EventAcronym{ICALP}
\EventYear{2021}
\EventDate{July 12--16, 2021}
\EventLocation{Glasgow, Scotland (Virtual Conference)}
\EventLogo{}
\SeriesVolume{198}
\ArticleNo{75}

\theoremstyle{definition}
\newtheorem{definition2}[theorem]{Definition}

\begin{document}

\maketitle

\begin{abstract}
We prove lower bounds on pure dynamic programming algorithms for maximum weight independent set (MWIS).
We model such algorithms as tropical circuits, i.e., circuits that compute with $\max$ and $+$ operations.
For a graph $G$, an MWIS-circuit of $G$ is a tropical circuit whose inputs correspond to vertices of $G$ and which computes the weight of a maximum weight independent set of $G$ for any assignment of weights to the inputs.
We show that if $G$ has treewidth $w$ and maximum degree $d$, then any MWIS-circuit of $G$ has $2^{\Omega(w/d)}$ gates and that if $G$ is planar, or more generally $H$-minor-free for any fixed graph $H$, then any MWIS-circuit of $G$ has $2^{\Omega(w)}$ gates.
An MWIS-formula is an MWIS-circuit where each gate has fan-out at most one.
We show that if $G$ has treedepth $t$ and maximum degree $d$, then any MWIS-formula of $G$ has $2^{\Omega(t/d)}$ gates.
It follows that treewidth characterizes optimal MWIS-circuits up to polynomials for \emph{all} bounded degree graphs and $H$-minor-free graphs, and treedepth characterizes optimal MWIS-formulas up to polynomials for all bounded degree graphs.
\end{abstract}

\section{Introduction}
In this paper, we prove lower bounds for \emph{tropical circuits} computing the weight of a maximum weight independent set (MWIS) of a graph.
A tropical circuit is a circuit with Max and Plus operations as gates.
In particular, we consider \emph{MWIS-circuits} of graphs.
An MWIS-circuit of a graph $G$ is a tropical circuit whose inputs correspond to the vertices of $G$ and which computes the weight of a maximum weight independent set of $G$ for any assignment of weights to the inputs.
An MWIS-formula is an MWIS-circuit where each gate has fan-out at most one.

Our motivation for proving lower bounds for MWIS-circuits is that many algorithmic techniques for maximum weight independent set implicitly build an MWIS-circuit of the input graph, and therefore the running time of any algorithm resulting from such a technique is bounded from below by the minimum size of an MWIS-circuit of the graph.
Examples of algorithmic techniques that build MWIS-circuits are dynamic programming over different kinds of decompositions of graphs~\cite{DBLP:journals/dam/ArnborgP89,DBLP:journals/tcs/Bui-XuanTV11,DBLP:conf/esa/FominG20a} and dynamic programming over potential maximal cliques~\cite{DBLP:journals/siamdm/ChudnovskyPPT20,DBLP:journals/siamcomp/FominTV15,DBLP:conf/soda/LokshantovVV14}.
Examples of algorithmic techniques that build MWIS-formulas are branching~\cite{DBLP:conf/focs/GartlandL20,DBLP:journals/dam/Ostergard02} and maximal independent set enumeration~\cite{moon1965cliques}.

\subsection{Our Results}
We prove unconditional lower bounds for sizes of MWIS-circuits and MWIS-formulas parameterized by graph parameters treewidth and treedepth, respectively.
The lower bounds are exponential in treewidth and treedepth, and therefore well-known algorithms yield matching upper bounds for them~\cite{DBLP:journals/dam/ArnborgP89,DBLP:conf/ijcai/FreuderQ85}.
We emphasize that our lower bounds are not worst-case bounds over graph classes, but instead hold for each individual graph.

\paragraph*{MWIS-Circuits and Treewidth}
First we characterize optimal MWIS-circuits of bounded degree graphs.

\begin{theorem}
\label{the:tw1}
Let $G$ be any graph with treewidth $w$ and maximum degree $d$.
Any MWIS-circuit of $G$ has $2^{\Omega(w/d)}$ gates.
\end{theorem}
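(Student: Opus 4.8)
The plan is to route everything through the combinatorics of the $(\max,+)$-polynomial that the circuit computes. Two easy reductions come first. Hardwiring the input of a vertex $v$ to $-\infty$ (the identity of $\max$ and the absorbing element of $+$) turns an MWIS-circuit of $G$ into an MWIS-circuit of $G-v$ of no larger size, so it suffices to prove the bound for a single cleverly chosen induced subgraph of $G$. And a short argument with weight assignments that put $1$ on a chosen independent set and $-\infty$ on every other vertex shows that a tropical circuit computes the MWIS value of $G$ exactly when the set of monomials of the polynomial it computes equals $\mathcal I(G)$, the family of independent sets of $G$; in particular the circuit may not produce any ``extra'' monomial. So the task becomes: lower-bound the size of a monotone tropical circuit whose monomial set is exactly $\mathcal I(H)$, for a suitable induced subgraph $H$.

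Next I would extract a rectangle cover from the circuit. A monotone circuit for a \emph{multilinear} polynomial is locally rectangular: at a multiplication gate $g = g_1\cdot g_2$ the variable sets of $g_1$ and $g_2$ are disjoint, and $\mathrm{Mon}(g)=\{a\cup b : a\in\mathrm{Mon}(g_1),\, b\in\mathrm{Mon}(g_2)\}$; since these are monomials of $f_H$, there is no edge of $H$ between $\mathrm{supp}(\mathrm{Mon}(g_1))$ and $\mathrm{supp}(\mathrm{Mon}(g_2))$. Walking from the output to a gate whose variable set $U$ is balanced ($|V(H)|/3\le |U|\le 2|V(H)|/3$), substituting a fresh variable for that gate, and iterating in Jerrum--Snir style, a circuit of size $s$ yields a cover $\mathcal I(H)=\bigcup_{i=1}^{s}R_i$ where each $R_i$ is the set of independent sets $S$ with $S\cap U_i$ in a fixed family on $U_i$ and $S\setminus U_i$ in a fixed family on $V(H)\setminus U_i$, the bipartition $(U_i, V(H)\setminus U_i)$ is balanced, and the two fixed families have non-adjacent supports --- so any vertex lying in neither support appears in no set of $R_i$. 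The heart of the argument is then a disjointness-type estimate. Suppose $H$ carries an induced matching $u_1v_1,\dots,u_mv_m$ that is \emph{robust}: every balanced bipartition of $V(H)$ is crossed by at least $\varepsilon m$ of its edges. If a matching edge $u_jv_j$ crosses $(U_i,V(H)\setminus U_i)$, then by the non-adjacency constraint one of its endpoints lies outside both supports of $R_i$, hence is absent from every set in $R_i$; this pins the $j$-th coordinate of any $J$ with $\{u_\ell:\ell\in J\}\cup\{v_\ell:\ell\notin J\}\in R_i$. Hence $R_i$ contains at most $2^{(1-\varepsilon)m}$ of the $2^m$ independent sets $\{u_\ell:\ell\in J\}\cup\{v_\ell:\ell\notin J\}$, and since all of them must be covered, $s\ge 2^{\varepsilon m}$.

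It remains to produce such an $H$ with $m=\Omega(w/d)$ inside any $G$ of treewidth $w$ and maximum degree $d$. I would start from the fact that treewidth $\ge w$ yields a well-linked set (or bramble) of order $\Omega(w)$, pass to the induced subgraph formed by such a set together with a system of vertex-disjoint paths witnessing its well-linkedness (which still has treewidth $\Omega(w)$, so its balanced bipartitions are crossed by $\Omega(w)$ edges), and then use $\Delta(G)\le d$ to ``thin out'' this subgraph to an induced one: repeatedly move an edge into the matching and delete the $O(d)$ vertices it dominates, so that the surviving matching is induced while only a $\Theta(1/d)$ fraction of the crossing capacity is retained. This should leave an induced structure in which a matching of size $\Omega(w/d)$ still crosses every balanced bipartition $\Omega(w/d)$ times. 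I expect this last step to be the main obstacle: it is where the degree parameter is used, and it is where a vertex-separator notion (treewidth) has to be converted into the edge/rectangle language of the circuit argument with only a $\Theta(d)$ loss, while crucially keeping the matching \emph{induced}, since an MWIS-circuit only hands us induced subgraphs for free.
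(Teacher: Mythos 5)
Your first two steps match the paper's. The characterization of MWIS-circuits via the monomial set being exactly the family of independent sets is the paper's Lemma~\ref{lem:mwispoly}; the Jerrum--Snir walk to a balanced gate and iterated substitution, yielding a cover of the polynomial by $\tau$ product terms $g_i\cdot h_i$ with non-adjacent supports and balanced splits of a chosen vertex set, is the paper's Lemma~\ref{lem:decomp} combined with Lemma~\ref{lem:prodpro}. (Two small points: the paper balances with respect to a fixed witness set $X$ from the Robertson--Seymour treewidth approximation, not with respect to all of $V(H)$; and what the treewidth argument actually gives you about each term is that the \emph{separator} $S_i = V(G)\setminus\spp(g_i\cdot h_i)$ has size $\geq k = \Omega(w)$, which is the quantity you should track.)

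Where you diverge, and where the gap lies, is the final combinatorial step. The paper applies the lopsided Lov\'asz Local Lemma (its Lemmas~\ref{pro:llll} and~\ref{lem:hitting}): after all $\tau$ separators $S_1,\dots,S_\tau$ are extracted, each of size $\Omega(w)$, the Local Lemma shows that if $\tau$ is smaller than $e^{\Omega(w/d)}$ then some independent set hits every $S_i$, and such a set cannot be computed by any $g_i\cdot h_i$. This argument is adaptive: it sees the $\tau$ separators first and then constructs a single hard independent set tailored to them. Your plan instead pre-commits to one fixed structure, a ``robust'' induced matching $M$, and argues every rectangle loses a constant fraction of the $2^{|M|}$ matching-generated independent sets. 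The rectangle counting itself is sound given robustness, but the robustness claim --- that \emph{every} balanced bipartition (or, correctly, every relevant separator) is crossed by $\Omega(|M|)$ matching edges --- is a strong, essentially expander-like property of $M$, and the greedy ``add an edge, delete the $O(d)$ dominated vertices'' process gives no control over which bipartitions end up well-crossed. An adversary choosing the bipartition after seeing $M$ can, in general, cut through very few matching edges; e.g.\ in a $w\times w$ grid, a single row cut crosses only $O(w)$ matching edges, so any matching much larger than $\Theta(w)$ fails robustness, and even a $\Theta(w)$-size matching must be arranged to spread across both axes --- not something the degree-based greedy ensures. You would also need to align $M$ with the Robertson--Seymour set $X$ and argue that every balanced separator of $X$ contains $\Omega(|M|)$ matching endpoints, which again the greedy construction does not provide. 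This is precisely the obstacle the paper's Local Lemma is introduced to avoid: rather than finding one structure that is hard against all possible separators, it finds one independent set that is hard against the specific $\tau$ separators the circuit actually produces.
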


Theorem~\ref{the:tw1} is optimal up to a factor $d$ in the sense that for each pair $w,d$ we can construct a graph with treewidth $\Omega(w)$ and maximum degree $O(d)$ that admits an MWIS-formula with $d 2^{w/d}$ gates.

Then we extend the result to some graphs that may have high-degree vertices.
A graph $H$ is a \emph{minor} of a graph $G$ if it can be obtained from $G$ by vertex deletions, edge deletions, and edge contractions.
If $H$ can be obtained by only vertex deletions and edge contractions, then it is an \emph{induced minor}.

\begin{theorem}
\label{the:tw2}
Let $G$ be any graph that contains an induced minor with treewidth $w$ and maximum degree $d$.
Any MWIS-circuit of $G$ has $2^{\Omega(w/(d 4^d))}$ gates.
\end{theorem}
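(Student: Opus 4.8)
The plan is to reduce \cref{the:tw2} to \cref{the:tw1}. The key tool is that MWIS-circuits pass to induced subgraphs essentially for free: if $G'$ is an induced subgraph of $G$, then hardwiring the input weights of the vertices in $V(G)\setminus V(G')$ to $0$ turns any MWIS-circuit of $G$ into an MWIS-circuit of $G'$ with no more gates, since putting a weight-$0$ vertex into an independent set can only restrict it and never raises the optimum, so the circuit then computes $\max\{w(I): I\text{ independent in }G,\ I\cap(V(G)\setminus V(G'))=\emptyset\}=\mathrm{MWIS}(G',w)$. (If one prefers, the constant $-\infty$ of the tropical semiring $(\mathbb{R}\cup\{-\infty\},\max,+)$ does the same job.) Hence it suffices to find inside $G$ an induced subgraph $G'$ of maximum degree $O(d)$ and treewidth $\Omega(w/4^{d})$, and to apply \cref{the:tw1} to $G'$.

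To construct such a $G'$, let $H$ be the induced minor promised in the statement, realised in $G$ by connected, pairwise-disjoint branch sets $\{G_u: u\in V(H)\}$ with $G_u$ adjacent to $G_v$ if and only if $uv\in E(H)$. Deleting all vertices of $G$ outside $\bigcup_u V(G_u)$ is an induced-subgraph operation, so by the reduction above we may assume $V(G)=\bigcup_u V(G_u)$, i.e.\ $G$ contracts onto $H$. For every $uv\in E(H)$ pick a vertex $p_{uv}\in V(G_u)$ with a neighbour in $G_v$, and let $S_u\subseteq V(G_u)$ be an inclusion-minimal set such that $G[S_u]$ is connected and contains $\{p_{uv}: uv\in E(H)\}$. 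Then $G':=G\bigl[\bigcup_u S_u\bigr]$ is an induced subgraph of $G$ in which each $G[S_u]$ is connected, and, because $G$ has no edges between non-adjacent branch sets, contracting the sets $S_u$ in $G'$ recovers exactly $H$; in particular $\mathrm{tw}(G')\ge\mathrm{tw}(H)=w$.

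It remains to control $\Delta(G')$, and this is where the factors $d$ and $4^{d}$ are paid. A vertex of $G'$ has two kinds of neighbours: those in its own skeleton $S_u$, and those in a skeleton $S_v$ of an adjacent branch set. The first number is kept small by minimality of $S_u$: every non-terminal of $S_u$ is a cut vertex of $G[S_u]$ each of whose sides contains a terminal, which limits how many neighbours a vertex can have towards each side. The second is the genuine difficulty: a single vertex of $S_u$ may be adjacent to many vertices of a neighbouring skeleton. To handle it one refines the construction, pruning the skeletons (and, where unavoidable, locally modifying the branch sets, for instance by grouping them according to the $\le 2^{d}$ incidence patterns of their $\le d$ port regions) so that every skeleton vertex sees only a bounded number of vertices of each neighbouring skeleton, at the cost of an $O(4^{d})$ factor in the treewidth of the contracted graph. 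Feeding the resulting bounded-degree induced subgraph into \cref{the:tw1} then yields the claimed $2^{\Omega(w/(d\,4^{d}))}$ lower bound. I expect this degree-control step to be the main obstacle: the reduction to induced subgraphs and the passage to the contracted skeleton are routine, whereas guaranteeing simultaneously a bounded induced degree and a treewidth loss of only $2^{O(d)}$ requires a careful choice of both the skeletons and a large fraction of the branch sets.
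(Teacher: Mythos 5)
Your opening reduction is sound: substituting the constant $0$ (or $-\infty$) for the inputs outside an induced subgraph $G'$ turns an MWIS-circuit of $G$ into one of $G'$ with no more gates, and this is exactly why the theorem is stated for induced minors rather than minors. But the heart of your argument --- that $G$ must contain a \emph{bounded-degree induced subgraph} of treewidth $\Omega(w/4^{d})$, obtained by pruning the branch sets to skeletons --- is precisely the step you do not prove, and you say so yourself. The within-cluster degree is indeed controllable by minimality (the paper does this in Lemma~\ref{lem:lowdg_min}: each cluster has at most $d$ terminals, so one can prune it to have maximum degree at most $d$). The cross-cluster degree is not: a vertex $x$ that is a cut vertex of its own skeleton (hence unremovable) may be adjacent to a large fraction of a neighbouring skeleton, all of whose vertices may likewise be needed for connectivity, and your proposed fix ("grouping by incidence patterns of port regions, at the cost of $O(4^d)$ in treewidth") is not a construction. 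It is not even clear that the intermediate statement you need is true; nothing in the hypothesis guarantees a bounded-degree induced subgraph of comparable treewidth, and the theorem as proved in the paper never produces one.

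The paper's proof avoids this entirely by \emph{not} reducing to Theorem~\ref{the:tw1}. It keeps the whole graph $G$ together with the cluster structure and reruns the separator-plus-Local-Lemma argument at the level of clusters. Concretely: take the Robertson--Seymour set $X'$ in $H$, lift it to $X \subseteq V(G)$ by choosing one vertex per cluster, and observe that any balanced separator of $X$ in $G$ must meet at least $k = \Omega(w)$ distinct clusters (otherwise it would project to a small balanced separator of $X'$ in $H$). The circuit decomposition lemma then yields $\tau$ sets $S_i$, each meeting $k$ clusters. The independent set hitting all the $S_i$ is built by a two-stage random process: select each cluster independently with probability $1/(4d2^{d})$, then take a uniformly random independent set inside each selected cluster; a short argument (Lemma~\ref{lem:isbasic}) shows each cluster vertex lands in the set with probability $\Omega(1/(d4^{d}))$, independently across clusters, and the lopsided Local Lemma finishes as in Theorem~\ref{the:tw1}. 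This is where the $d4^{d}$ in the exponent actually comes from. If you want to complete a proof along these lines, you should redirect your effort from constructing a bounded-degree induced subgraph to adapting the probabilistic hitting argument to the cluster structure.
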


In Theorem~\ref{the:tw2} it is essential to require an induced minor instead of a minor because a complete graph with $n$ vertices admits an MWIS-circuit of size $O(n)$, but contains all $n$-vertex graphs as minors.

A graph $G$ is $H$-minor-free if it does not contain the graph $H$ as a minor.

\begin{corollary}
\label{cor:minor}
Let $H$ be any fixed graph and $G$ any $H$-minor-free graph with treewidth $w$.
Any MWIS-circuit of $G$ has $2^{\Omega(w)}$ gates.
\end{corollary}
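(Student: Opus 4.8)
The plan is to deduce \Cref{cor:minor} from \Cref{the:tw2}. It suffices to prove the following structural statement: there are constants $c_H, d_H$ depending only on $H$ such that every $H$-minor-free graph $G$ of treewidth $w$ contains an \emph{induced} minor $G'$ with $\Delta(G') \le d_H$ and $\operatorname{tw}(G') \ge c_H w$. Given this, \Cref{the:tw2} applied with $d = d_H$ immediately yields that every MWIS-circuit of $G$ has $2^{\Omega(w/(d_H 4^{d_H}))} = 2^{\Omega(w)}$ gates, since $d_H 4^{d_H} = O_H(1)$. So the whole content of the corollary is this ``bounded-degree induced minor of linear treewidth'' statement.

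To build $G'$, I would start from the fact that grid minors are linear in treewidth for $H$-minor-free graphs (Demaine and Hajiaghayi, with later quantitative improvements): there is $c'_H$ such that $G$ contains the $k \times k$ grid as a minor for some $k \ge c'_H w$. Fix a minor model $\{B_{i,j}\}_{i,j \in [k]}$ of the $k \times k$ grid that is minimal, e.g.\ minimizing $\sum_{i,j} |B_{i,j}|$ and then the number of edges inside $\bigcup_{i,j} B_{i,j}$. Deleting every vertex outside $\bigcup_{i,j} B_{i,j}$ and contracting each $B_{i,j}$ to a single vertex is an induced-minor operation, so the resulting graph $G'$, on vertex set $[k]^2$, is an induced minor of $G$. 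By construction $G'$ contains the $k \times k$ grid as a spanning subgraph, so $\operatorname{tw}(G') \ge k = \Omega_H(w)$; and since $G'$ is in particular a minor of $G$ it is still $H$-minor-free, hence sparse, with $|E(G')| = O_H(k^2)$.

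The remaining step is to bound $\Delta(G')$ by a constant, and this is the part I expect to be the real work. The degree of $(i,j)$ in $G'$ is the number of branch sets adjacent to $B_{i,j}$ in $G$. The point is to exploit minimality: each $B_{i,j}$ may be taken to be a tree, indeed a minimal Steiner tree spanning the vertices of $B_{i,j}$ that have a neighbor in another branch set, so that in particular every leaf of $B_{i,j}$ has such a neighbor. Combining this with the sparsity of $H$-minor-free graphs (only few vertices can have large degree), one wants to argue that a branch set with too many neighboring branch sets, together with the surrounding grid used to interconnect those neighbors, would produce a $K_{|V(H)|}$ minor (hence an $H$ minor) in $G$, or could be rerouted to contradict minimality. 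The subtlety is that a single vertex adjacent to many branch sets is not by itself forbidden in an $H$-minor-free graph --- an apex over a grid realizes this --- so the argument must use that the chosen model is size-minimal, so that no branch set acts as a wasteful hub. An alternative route to the same structural statement is to invoke the Flat Wall Theorem (Robertson and Seymour; Kawarabayashi, Thomas, and Wollan): since $G$ is $H$-minor-free it contains no large clique minor, hence contains a flat wall of height $\Omega_H(w)$, from which one should be able to extract an induced minor isomorphic to a subdivided wall of height $\Omega_H(w)$ --- a graph of maximum degree $3$ and treewidth $\Omega_H(w)$ --- where the technical work is in passing from ``flat'' to genuinely ``induced''.
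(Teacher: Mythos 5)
Your high-level reduction is exactly the paper's: find a bounded-degree induced minor of treewidth $\Omega(w)$ inside $G$, then invoke \Cref{the:tw2} with the degree bound absorbed into the constant. You also correctly identify Demaine and Hajiaghayi's linear grid-minor theorem for $H$-minor-free graphs as the source of the treewidth-$\Omega(w)$ object. But the crucial step---producing a bounded-degree \emph{induced} minor---is precisely where your argument stops. You acknowledge this yourself (``this is the part I expect to be the real work''), and neither of your two sketched routes is carried out: the ``minimal model + $K_{|V(H)|}$-minor'' argument is only a plan, and you explicitly flag (via the apex-over-a-grid example) why minimality alone is not obviously enough; the Flat-Wall route is likewise deferred with ``the technical work is in passing from flat to genuinely induced.'' So as written this is a genuine gap, not a completed proof.

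The paper's actual proof avoids your difficulty entirely. Rather than taking an arbitrary grid minor model, contracting it to a graph $G'$ that merely \emph{contains} a grid spanning subgraph, and then fighting to bound $\Delta(G')$, the paper observes that the Demaine--Hajiaghayi construction already yields the $\Omega(w)\times\Omega(w)$ grid itself as an \emph{induced} minor: the footnote notes that the proof in that reference uses no edge deletions, and the bounded-genus result it depends on is already phrased as a contraction to a graph turnable into a grid by vertex deletions only. Since the grid has maximum degree $4$, one can take $d=4$ in \Cref{the:tw2} and conclude immediately. In short, the missing lemma in your write-up is not proved from scratch in the paper either---it is obtained by reading the cited grid-minor proof more carefully and noticing it already gives the induced-minor form. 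If you wanted to complete your route instead, you would need to actually prove that a size-minimal grid model in an $H$-minor-free graph has branch sets of bounded ``cluster degree,'' or carry out the flat-wall-to-induced-subdivided-wall extraction; both are substantially more work than the paper's one-footnote observation.
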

\begin{proof}
For any fixed $H$, every $H$-minor-free graph of treewidth $w$ contains an $\Omega(w) \times \Omega(w)$-grid as an induced minor~\cite{DBLP:journals/combinatorica/DemaineH08}\footnote{The stated result in~\cite{DBLP:journals/combinatorica/DemaineH08} is that such a grid is a minor, but the same proof works directly to show that the constructed grid minor is also an induced minor. In particular, the proof in~\cite{DBLP:journals/combinatorica/DemaineH08} does not use any edge deletions, and the corresponding result for bounded-genus graphs that it depends on~\cite{DBLP:journals/siamdm/DemaineHT06} is already stated in terms of contraction to a graph that can be turned into a grid by removing vertices without decreasing treewidth by more than a constant factor.}.
An $\Omega(w) \times \Omega(w)$-grid has treewidth $\Omega(w)$ and maximum degree 4, so the result follows from Theorem~\ref{the:tw2}.
\end{proof}

Corollary~\ref{cor:minor} implies a $2^{\Omega(w)}$ lower bound for all planar graphs because planar graphs are $K_5$-minor-free~\cite{kuratowski1930probleme}.

The following corollary follows from Theorem~\ref{the:tw1}, Corollary~\ref{cor:minor}, constant-factor treewidth approximation in $2^{O(w)} n^{O(1)}$ time~\cite{DBLP:journals/jal/RobertsonS86}, and dynamic programming over a tree decomposition~\cite{DBLP:journals/dam/ArnborgP89}.

\begin{corollary}
\label{cor:main}
There is an algorithm which, given a bounded degree or $H$-minor-free graph $G$ whose smallest MWIS-circuit has $\tau$ gates, constructs an MWIS-circuit of $G$ with $\tau^{O(1)}$ gates in $\tau^{O(1)}$ time.
\end{corollary}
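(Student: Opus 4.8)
The plan is to give a single algorithm that first computes an approximately optimal tree decomposition of $G$ and then turns it into an MWIS-circuit by the textbook dynamic programming, and to argue \emph{after the fact}, using Theorem~\ref{the:tw1} and Corollary~\ref{cor:minor}, that the resulting circuit is small. In detail: on input $G$, run the constant-factor treewidth approximation algorithm of~\cite{DBLP:journals/jal/RobertsonS86}; writing $w$ for the treewidth of $G$, this runs in $2^{O(w)}n^{O(1)}$ time and outputs a tree decomposition of width $w' = O(w)$. Make it nice and run the standard maximum weight independent set dynamic program over it~\cite{DBLP:journals/dam/ArnborgP89}, which we read off as a tropical circuit: for each bag $B$ and each independent set $S \subseteq B$ we keep one gate holding the maximum, over independent sets $I$ in the subgraph induced by the bags below the corresponding node with $I \cap B = S$, of the total weight of $I$ \emph{outside} $B$. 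With this accounting each introduce, forget, and join node is implemented by $O(1)$ $\max$ and $+$ gates --- crucially, the weight of a vertex is added at the (unique) forget node that removes it, so join nodes become plain additions and no subtraction is needed --- and the whole circuit has $2^{O(w')}n^{O(1)} = 2^{O(w)}n^{O(1)}$ gates and is built in that much time. The algorithm outputs this circuit; note that it knows neither $w$ nor $\tau$ and makes no case distinction, only the analysis does.

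It remains to see that $2^{O(w)}n^{O(1)} \le \tau^{O(1)}$. For the vertex count: since the weight of a maximum weight independent set depends on every single vertex weight, every input of the minimum MWIS-circuit is read, so each of the $n$ inputs has out-degree at least one; as each of its $\tau$ gates has fan-in two, this forces $n \le 2\tau$. For the treewidth: if $G$ has maximum degree bounded by the constant $d$ then $\tau \ge 2^{\Omega(w/d)}$ by Theorem~\ref{the:tw1}, and if $G$ is $H$-minor-free then $\tau \ge 2^{\Omega(w)}$ by Corollary~\ref{cor:minor}; in either case $2^{\Omega(w)} \le \tau^{O(1)}$ once $w$ exceeds the constant hidden in the $\Omega$, while for smaller $w$ one has $2^{O(w)} = O(1)$, so in all cases $2^{O(w)} = \tau^{O(1)}$. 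Combining the two bounds, $2^{O(w)}n^{O(1)} = \tau^{O(1)}$, which bounds both the size of the constructed circuit and the running time, as required.

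The result is essentially an assembly of existing pieces, so there is no single hard step; the points that deserve a moment's care are that the $n^{O(1)}$ factors are themselves $\tau^{O(1)}$ (this is what lets the statement be phrased purely in terms of $\tau$), that the dynamic program must be arranged to use only $\max$ and $+$, and that in the bounded-degree case the exponent of $\tau^{O(1)}$ ultimately depends on the degree bound $d$.
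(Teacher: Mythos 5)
Your proposal is correct and is exactly the elaboration of the one-line proof sketch the paper gives (treewidth approximation plus tree-decomposition dynamic programming, with Theorem~\ref{the:tw1} and Corollary~\ref{cor:minor} used only in the analysis to bound $2^{O(w)}$ and $n$ by $\tau^{O(1)}$). The only minor point worth flagging is that $n \le \tau$ already follows directly from the fact that every read variable requires at least one input gate, so the in-degree/out-degree counting is unnecessary; and strictly one should run the Robertson--Seymour routine with a doubling search over the target width since $w$ is not known in advance, but neither affects correctness.
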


In particular, a property analogous to automatizability of proof systems~\cite{DBLP:journals/siamcomp/BonetPR00} holds for MWIS-circuits on bounded degree graphs and $H$-minor-free graphs.

\paragraph*{MWIS-Formulas and Treedepth}
We characterize optimal MWIS-formulas of bounded degree graphs.

\begin{theorem}
\label{the:td}
Let $G$ be any graph with treedepth $t$ and maximum degree $d$.
Any MWIS-formula of $G$ has $2^{\Omega(t/d)}$ gates.
\end{theorem}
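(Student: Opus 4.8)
The plan is to reduce Theorem~\ref{the:td} to a lower bound for fan-out-one tropical circuits on graphs carrying a \emph{nested} family of ``hard cuts'', mirroring the way Theorem~\ref{the:tw1} reduces to a lower bound for MWIS-circuits on graphs carrying one well-linked set (bramble) of large order, but with the linear/recursive structure that distinguishes treedepth from treewidth. Note first that the size of an MWIS-formula is monotone under setting an input $x_v$ to $0$: since $\mathrm{MWIS}(G)$ with $x_v=0$ equals $\mathrm{MWIS}(G-v)$, this turns an MWIS-formula of $G$ into an MWIS-formula of $G-v$ of no larger size. So it suffices to exhibit inside $G$ an induced substructure of ``recursive depth'' $\Omega(t/d)$ and to prove the bound there.

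\emph{Step 1 (extracting a nested chain of hard cuts).} Fix an elimination forest of $G$ of depth exactly $t=\mathrm{td}(G)$ and a root-to-leaf path $v_1,\dots,v_t$ in it, and let $A_j$ be the vertex set of the subtree rooted at $v_j$. Then $A_1 \supsetneq \cdots \supsetneq A_t$ with $v_j \in A_j\setminus A_{j+1}$, and the defining property of elimination forests gives that every edge of $G$ leaving $A_j$ ends in $\{v_1,\dots,v_{j-1}\}$; hence each $\{v_1,\dots,v_{j-1}\}$ separates $A_j$ from the rest of $G$, and the $A_j$ form a linearly ordered chain of separations. The degree bound now enters exactly as in Theorem~\ref{the:tw1}: a single vertex of degree $d$ can ``short-circuit'' several consecutive cuts at once, so we pass to a subchain indexed by $j_1<\cdots<j_k$ with $k=\Omega(t/d)$ whose successive interfaces stay bounded, and we set to $0$ the weights of the vertices consumed in this thinning. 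The result is an induced subgraph $G'$ of $G$ of maximum degree at most $d$ together with $k=\Omega(t/d)$ cuts $\bigl(X_1,S_1,Y_1\bigr),\dots,\bigl(X_k,S_k,Y_k\bigr)$ that are \emph{nested} --- $Y_1 \supsetneq Y_2 \supsetneq \cdots \supsetneq Y_k$, with each separator $S_{i+1}$ lying deep inside $X_i\cup S_i$ --- and each of which is ``MWIS-hard'' in the same sense that makes the single well-linked set work in Theorem~\ref{the:tw1}: the cut admits weightings agreeing off $X_i$ for which an optimal independent set is forced to behave incompatibly on $S_i$.

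\emph{Step 2 (the formula lower bound for a nested chain).} It then remains to prove that an MWIS-formula $F$ of a graph carrying a nested chain of $k$ hard cuts has $2^{\Omega(k)}$ gates, which I would do by induction on $k$. Since $F$ is a tree, the sub-formulas rooted at two incomparable gates read their inputs from disjoint sets of leaves; using the outermost hard cut $(X_1,S_1,Y_1)$ one shows that $F$ must, in effect, evaluate the conditional MWIS of the deep part $G'[Y_1\cup S_1]$ for at least two essentially different boundary states of $S_1$, and that the computation for each state is carried by a separate sub-formula which, restricted appropriately, computes an MWIS-type function of a graph still carrying the remaining $k-1$ nested hard cuts. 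Hence $\mathrm{leafsize}(F) \ge 2\cdot \mathrm{leafsize}_{k-1} \ge 2^{\Omega(k)} = 2^{\Omega(t/d)}$. An alternative packaging of the same idea is to first show that every MWIS-formula of $G'$ has \emph{depth} $\Omega(k)$ --- via a Karchmer--Wigderson-style game on $(\max,+)$-formulas in which a short protocol read off from the formula would be forced to resolve each of the $k$ nested cuts in turn --- and then invoke that a leafsize-$s$ formula over the tropical semiring can be rebalanced to depth $O(\log s)$, again giving $s\ge 2^{\Omega(k)}$. A measure in the spirit of Jukna's $\sep$ and $\spp$, adapted to fan-out-one circuits, is the natural bookkeeping device throughout.

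\emph{Main obstacle.} The genuinely new content is Step~2: tropical \emph{formula} lower bounds do not follow from the circuit bound of Theorem~\ref{the:tw1}, and the obvious Ne\v{c}iporuk-type counting is unhelpful because the restrictions of an MWIS-function to a block of vertices range over a continuum of additive offsets rather than over finitely many subfunctions, so ``number of distinct subfunctions'' is infinite. I therefore expect most of the work to be in pinning down the correct fan-out-one analogue of the $\sep$/$\spp$ measures --- one that provably lower-bounds MWIS-formula size and is $2^{\Omega(k)}$ on a nested chain of $k$ hard cuts --- and in making the ``two disjoint sub-formulas survive'' step rigorous, which is exactly where the tree structure of a formula (as opposed to a DAG) is used. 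The Step~1 extraction is the secondary difficulty: the precise way the degree $d$ is spent to keep successive interfaces small needs care, and because treedepth is non-monotone under vertex deletion (deletion can only decrease it), the chain must be read off from $G$ itself and only afterwards exposed by zeroing weights.
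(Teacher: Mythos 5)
Your plan has a genuine gap at its core: Step~2, which you yourself identify as ``the genuinely new content,'' is exactly the theorem and is not proved. The claim that the outermost hard cut forces two \emph{disjoint} sub-formulas, one per ``boundary state of $S_1$,'' is the kind of statement that fails without a careful argument: fan-out one only prevents reuse of \emph{gates}, not semantic reuse of monomials, and a single sub-formula feeding a $\max$ gate can contribute monomials serving several boundary states at once. You do not define ``MWIS-hard cut'' precisely enough to run the induction, nor do you show such cuts exist in every graph of treedepth $t$ and degree $d$. The alternative route (a Karchmer--Wigderson-style depth bound plus tropical formula balancing) is likewise only named, not carried out. Step~1 also has a concrete problem: for a root-to-leaf path $v_1,\dots,v_t$ of an elimination forest, the interface of $A_j$ is $\{v_1,\dots,v_{j-1}\}$, which grows linearly in $j$; passing to a subchain does not shrink these interfaces, and it is unclear what ``successive interfaces stay bounded'' means or how the degree bound $d$ is supposed to achieve it. In Theorem~\ref{the:tw1} the degree enters through an entirely different mechanism (hitting large separators with a random independent set via the Local Lemma), not through any thinning of a chain of cuts.

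For comparison, the paper's proof does not extract any structure from the graph beyond the two elementary facts that deleting a vertex drops treedepth by at most one and that treedepth is attained on a connected component. Instead it works on the formula: it assigns to every gate $g$ a set $\sep(g)$ (defined top-down from the output gate, growing by $\spp(p)\setminus\spp(g)$ at each sum gate) and shows by induction on the potential $|\sep(g)|+td(G[\spp(g)])$ that any ``typical'' independent set hitting every $\sep(g)$ with $|\sep(g)|\ge td(G)/2$ is not computed by the formula. The degree bound is then spent in a single application of Lemma~\ref{lem:hitting} to the family $\F$ of all large $\sep(g)$ and $\spp(g)$: if the formula has fewer than roughly $e^{t/(12d)}/12$ gates, such an independent set exists, a contradiction. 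If you want to salvage your approach, the honest statement is that you would need to invent and verify the fan-out-one progress measure you allude to; as written, the proposal reduces the theorem to an unproven claim that is at least as hard as the theorem itself.
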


Again, Theorem~\ref{the:td} is optimal up to a factor $d$ by the same construction as Theorem~\ref{the:tw1}.
As formulas can be thought of as bounded space analogies of circuits, Theorem~\ref{the:td} gives further evidence (in addition to e.g.~\cite{DBLP:journals/algorithms/ChenRRV18,DBLP:journals/corr/abs-2011-04778,DBLP:journals/toct/PilipczukW18}) supporting that while treewidth is the right parameter for CSP-like problems when equipped with unlimited space, treedepth is the right parameter when dealing with bounded space.

Obtaining a constant-factor single-exponential time parameterized approximation algorithm for treedepth is a well-known open problem~\cite{DBLP:conf/esa/CzerwinskiNP19}, so while we know that the converse of Theorem~\ref{the:td} existentially holds in bounded degree graphs, we currently do not know how to construct such MWIS-formulas without having the treedepth decomposition as an input.

\subsection{Techniques}
Our main circuit complexity tool is an adaptation of a circuit decomposition lemma used in e.g.~\cite{DBLP:journals/siamcomp/Hyafil79,DBLP:journals/jacm/JerrumS82,DBLP:journals/tcs/Valiant80}.
In particular, we show that this lemma can be adapted so that given an MWIS-circuit with $\tau$ gates of a graph with treewidth $w$ it extracts a family of $\tau$ vertex separators each of size $\Omega(w)$.
Once this family has been extracted, the main challenge for proving Theorem~\ref{the:tw1} is to show that if this family of separators is too small, there exists an independent set that intersects all of the separators.
For this we use the lopsided Lovász Local Lemma~\cite{DBLP:journals/dam/ErdosS91}, though we note that more elementary arguments would suffice to prove the theorem with a worse dependency on $d$.
To extend the result from bounded degree graphs to $H$-minor-free graphs we use the minor model of the bounded degree induced minor with high treewidth to further control the structure of these separators.

For MWIS-formulas parameterized by treedepth $t$ we similarly extract a family of $2\tau$ vertex sets each of size $\Omega(t)$ from a $\tau$-gate MWIS-formula, showing that if an independent set intersects all of these vertex sets the formula cannot compute it.
The same application of the Local Lemma is used to prove that such an independent set indeed exists in low degree graphs if $\tau$ is too small.
The argument for extracting the family from the formula is more ad-hoc than the argument for circuits.

\subsection{Related Work}
The convention of modeling dynamic programming algorithms as tropical circuits originates from the recent works of Jukna~\cite{DBLP:journals/mst/Jukna15,DBLP:journals/siamdm/Jukna16}, although some earlier results in monotone arithmetic circuit complexity apply also to tropical circuits~\cite{DBLP:journals/jacm/JerrumS82}.
In general, tropical circuit lower bounds imply lower bounds for monotone arithmetic circuits, but not necessarily the other way around~\cite{DBLP:journals/mst/Jukna15}.
In addition to the works of Jukna, the other works explicitly giving lower bounds for tropical circuits or formulas that we are aware of are~\cite{DBLP:conf/mfcs/MahajanNT17,mahajan2019shortest}.
We are not aware of prior works on lower bounds for tropical circuits or formulas considering maximum weight independent set or the graph parameters treewidth or treedepth.

There are multiple worst-case hardness results related to different formulations of the independent set polynomial.
In~\cite{DBLP:conf/mfcs/BriquelK09} it was shown that the multivariate independent set polynomial is VNP-complete.
The univariate independent set polynomial is \#P-hard to evaluate at every non-zero rational point~\cite{DBLP:conf/stacs/BlaserH08}, and more fine-grainedly its evaluation has $2^{\Omega(n / \log^3 n)}$ worst-case complexity assuming \#ETH~\cite{DBLP:conf/iwpec/Hoffmann10}.

Chv{\'{a}}tal has shown that a certain proof system for maximum independent set which naturally corresponds to branching algorithms requires exponential size proofs on almost all graphs that have the number of edges linear in the number of vertices~\cite{DBLP:journals/siamcomp/Chvatal77}.

Multiple worst-case lower bounds of form $n^{\Omega(w)}$ in limited models of computation for graph homomorphism problems of a pattern graph with treewidth $w$ to a graph with $n$ vertices are known~\cite{DBLP:conf/innovations/AustrinKK19,DBLP:journals/corr/abs-2011-04778,DBLP:journals/siamcomp/LiRR17}.
In particular, recently it was shown that the worst-case monotone arithmetic circuit complexity of homomorphism polynomial is $\Theta(n^{w+1})$, and the worst-case monotone arithmetic formula complexity is $\Theta(n^t)$, where $t$ is the treedepth of the pattern graph~\cite{DBLP:journals/corr/abs-2011-04778}.

Recently, a lower bound of $2^{\Omega(w)}$ was shown for DNNF-compilation of monotone CNFs with primal treewidth $w$ and bounded degree and arity, applying to all such CNFs~\cite{DBLP:journals/mst/AmarilliCMS20}.
We note that after the acceptance of this paper, we became aware of a reduction from MWIS-circuits to DNNFs that allows to prove a weaker version of our Theorem~\ref{the:tw1} via the result of~\cite{DBLP:journals/mst/AmarilliCMS20}.
In particular, the techniques of~\cite{DBLP:journals/mst/AmarilliCMS20} yield an exponent of form $\Omega(w/2^d)$ instead of the best possible $\Omega(w/d)$ given in Theorem~\ref{the:tw1}.

\subsection{Organization}
In Section~\ref{sec:preli} we present preliminaries on graph theory, define MWIS-circuits and prove simple lemmas on them, and discuss the lopsided Lovász Local Lemma and prove a lemma using it.
In Section~\ref{sec:tw} we prove the lower bounds for MWIS-circuits parameterized by treewidth, i.e., Theorems~\ref{the:tw1} and~\ref{the:tw2}.
In Section~\ref{sec:td} we prove the lower bound for MWIS-formulas parameterized by treedepth, i.e., Theorem~\ref{the:td}.
In Section~\ref{sec:opt} we give the construction that shows the optimality of Theorems~\ref{the:tw1} and~\ref{the:td} up to a factor $d$.
We conclude and discuss future work in Section~\ref{sec:disc}.

\section{Preliminaries}
\label{sec:preli}

\subsection{Graphs}
The vertex set of a graph $G$ is denoted by $V(G)$ and the edge set by $E(G)$.
The set of neighbors of a vertex $v$ is denoted by $N(v)$ and the neighborhood of a vertex set $X$ by $N(X) = \bigcup_{v \in X} N(v) \setminus X$.
Closed neighborhoods are denoted by $N[v] = N(v) \cup \{v\}$ and $N[X] = N(X) \cup X$.
The subgraph $G[X]$ induced by a vertex set $X \subseteq V(G)$ has $V(G[X]) = X$ and $E(G[X]) = \{\{u, v\} \in E(G) \mid u \in X \wedge v \in X\}$.
We also use $G \setminus X = G[V(G) \setminus X]$ to denote induced subgraphs.
An independent set of $G$ is a vertex set $I$ such that $G[I]$ has no edges.
In particular, an empty set is an independent set.

A tree decomposition of a graph $G$ is a tree $T$ whose each vertex $i \in V(T)$ corresponds to a bag $B_i \subseteq V(G)$, satisfying that 
\begin{enumerate}
\item $V(G) = \bigcup_{i \in V(T)} B_i$,
\item for each $\{u, v\} \in E(G)$ there is a bag $B_i$ with $\{u, v\} \subseteq B_i$, and
\item for each $v \in V(G)$ the subtree of $T$ induced by bags containing $v$ is connected.
\end{enumerate}
The width of a tree decomposition is $\max |B_i|-1$ and the treewidth $tw(G)$ of a graph $G$ is the minimum width over its tree decompositions.

A treedepth decomposition of a graph $G$ is a rooted forest $F$ with vertex set $V(F) = V(G)$, satisfying for each $\{u, v\} \in E(G)$ that $u$ and $v$ have an ancestor-descendant relation in $F$.
The depth of $F$ is the maximum number of vertices on a simple path from a root to a leaf in $F$.
The treedepth $td(G)$ of a graph $G$ is the minimum depth over its treedepth decompositions.
Note that $tw(G) + 1 \le td(G)$.

\subsection{MWIS-Circuits}
We start by giving a formal definition of a tropical circuit.
Our definition is non-standard in that it does not allow any other input constants than $0$, which we can w.l.o.g. assume in the context of maximum weight independent set.
For a comprehensive treatment of tropical circuits and their relations to monotone Boolean and monotone arithmetic circuits see~\cite{DBLP:journals/mst/Jukna15}. 

\begin{definition2}
A tropical circuit over variables $X$ is a directed acyclic graph with in-degree of each vertex either $0$ or $2$.
The vertices are called gates, the in-degree of a gate is called fan-in, and the out-degree of a gate is called fan-out.
Each gate with fan-in $0$ is labeled with a variable $x_i \in X$ or the constant $0$ and each gate with fan-in $2$ is labeled with either $\max$ or $+$.
One gate is designated as the output gate.
A tropical formula is a tropical circuit where each gate has fan-out at most $1$.
\end{definition2}

With an assignment of real numbers to the variables $X$, a tropical circuit outputs a number computed by the output gate by natural semantics, i.e., a gate labeled with a variable $x_i$ computes the value of $x_i$, a gate labeled with $0$ computes $0$, a gate labeled with $+$ computes the sum of the values computed by its children, and a gate labeled with $\max$ computes the maximum of the values computed by its children.
In particular, a tropical circuit computes a tropical polynomial in the variables $X$ over the tropical $(\mathbb{R} \cup \{-\infty\}, \max, +)$ semiring.
In the tropical semiring $\max$ corresponds to addition and $+$ corresponds to multiplication, with $-\infty$ as the zero and $0$ as the unit.
We will refer to $\max$ as addition and to $+$ as multiplication.

We define an \emph{MWIS-polynomial} with the following simple lemma.

\begin{lemma}
\label{lem:mwispoly}
Let $G$ be a graph.
A tropical circuit over variables $V(G)$ computes the weight of a maximum weight independent set of $G$ for any assignment of real weights to the inputs if and only if for the tropical polynomial $f$ computed by the circuit it holds that
\begin{enumerate}
\item each monomial of $f$ is of form $v_1 \cdot \ldots \cdot v_l$, where $\{v_1, \ldots, v_l\}$ is an independent set of $G$ and
\item for each independent set $\{v_1, \ldots, v_l\}$ of $G$ there is a monomial $v_1 \cdot \ldots \cdot v_l$ in $f$, including the empty independent set corresponding to the empty product $0$.
\end{enumerate}
\end{lemma}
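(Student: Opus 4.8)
The plan is to establish both directions of the equivalence by working with the explicit description of the tropical polynomial computed by the circuit. Recall that a tropical polynomial $f$ over variables $V(G)$ can be written as a finite set of monomials, where each monomial is a product $v_1 \cdot \ldots \cdot v_l$ over a multiset of variables; however, since in the tropical semiring $v \cdot v = 2v$ contributes strictly more than $v$ whenever the weight of $v$ is positive and strictly less when it is negative, a monomial using a variable more than once can never be the unique maximizer over all weight assignments, but this subtlety is exactly why we phrase things in terms of which independent sets appear. The key observation is that for any weight assignment $\alpha : V(G) \to \mathbb{R}$, a tropical polynomial $f$ evaluates to $\max_{M \text{ monomial of } f} \alpha(M)$, where $\alpha(M)$ is the sum of the weights of the variables in $M$ (with multiplicity). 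On the other hand, the weight of a maximum weight independent set of $G$ under $\alpha$ is $\max_{I} \sum_{v \in I} \alpha(v)$ over all independent sets $I$ (including $\emptyset$, giving value $0$).

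For the "if" direction, suppose $f$ satisfies conditions (1) and (2). Fix any weight assignment $\alpha$. By (1), every monomial of $f$ corresponds to a (sub)set of an independent set, so $\alpha(M) \le \max_I \sum_{v \in I}\alpha(v)$ for each monomial $M$; hence $f(\alpha) \le \mathrm{MWIS}(G,\alpha)$. Conversely, let $I^\ast$ be a maximum weight independent set; we may assume $I^\ast$ contains only vertices of nonnegative weight (dropping negative-weight vertices can only help and keeps independence), and by (2) there is a monomial equal to the product over $I^\ast$, whose value under $\alpha$ is $\sum_{v \in I^\ast}\alpha(v) = \mathrm{MWIS}(G,\alpha)$. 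Therefore $f(\alpha) \ge \mathrm{MWIS}(G,\alpha)$, and equality follows. Note the empty monomial $0$ handles the case where all weights are negative.

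For the "only if" direction, suppose the circuit computes $\mathrm{MWIS}(G,\cdot)$ on all real weight assignments; I want to show $f$ satisfies (1) and (2). For (1): suppose toward a contradiction that $f$ has a monomial $M$ that is not a product over an independent set — either it repeats a variable, or its variable set contains an edge $\{u,v\}$. In either case I construct a weight assignment making $\alpha(M)$ strictly larger than $\mathrm{MWIS}(G,\alpha)$: e.g. if $M$ repeats a variable $v$, set $\alpha(v)$ to a huge positive value $W$ and all others to a tiny value, so $\alpha(M) \ge 2W - \varepsilon$ while no independent set achieves more than $W + (\text{small})$; if $M$'s support contains an edge $\{u,v\}$, set both $\alpha(u),\alpha(v)$ large and everything else small, so $\alpha(M)$ counts both but no independent set can use both. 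This contradicts $f(\alpha) = \mathrm{MWIS}(G,\alpha)$. For (2): suppose some independent set $I = \{v_1,\ldots,v_l\}$ (possibly empty) has no corresponding monomial in $f$. Assign weight $0$ to every vertex of $I$ and a large negative weight $-W$ to every vertex outside $I$. Then $\mathrm{MWIS}(G,\alpha) = 0$ (achieved by $I$, or by $\emptyset$, and no set does better), but every monomial of $f$ either is the empty monomial (value $0$, excluded by assumption since $I=\emptyset$ would be that case) or uses at least one variable; by (1), which we have already proved, each monomial's support is an independent set, and if that support is not a subset of $I$ it contains a vertex outside $I$ contributing $-W$ so its value is at most $-W + 0 < 0$, while if it is a nonempty subset of $I$... here I need that the only monomials with support inside $I$ are those whose support equals a subset of $I$, and I want to rule out that the monomial for $I$ itself is missing while a proper-subset monomial fills in; the fix is to choose $\alpha$ that isolates $I$ exactly: give each $v_i \in I$ a distinct tiny positive weight $\epsilon_i$ with $\sum \epsilon_i$ small, and give vertices outside $I$ weight $-W$; then $\mathrm{MWIS}(G,\alpha) = \sum_i \epsilon_i$, uniquely attained by using all of $I$, whereas any monomial of $f$ with support a proper subset of $I$ gives a strictly smaller value and any monomial using an outside vertex gives a value $< 0$ — so $f(\alpha) < \mathrm{MWIS}(G,\alpha)$, a contradiction.

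The main obstacle is the bookkeeping in the "only if" direction for (2): one must design the weight assignment carefully enough that the specific missing monomial (the one for $I$) cannot be compensated by other monomials, which is why distinct generic positive weights on $I$ combined with strongly negative weights outside are needed; once condition (1) is in hand this is routine, so the real content is simply ordering the argument so that (1) is established before (2). Everything else is a direct translation between the max-plus evaluation of $f$ and the combinatorial definition of MWIS.
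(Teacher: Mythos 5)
Your proof is correct and takes essentially the same approach as the paper: both directions hinge on the same observations, and the only-if direction is shown by exhibiting weight assignments that expose a missing/extra monomial. The paper uses simpler weight choices (for multilinearity weight $1$ on $v$ and $0$ elsewhere, for non-independence weight $1$ on the support and $0$ elsewhere, and for a missing monomial weight $1$ on $I$ and $-1$ outside), but these are functionally equivalent to your more elaborate $W$/$\epsilon_i$ constructions, and your explicit ordering of establishing condition (1) before using it in the argument for condition (2) is the same dependency the paper relies on implicitly.
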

\begin{proof}
For the if-direction, (1) guarantees that the value computed by the circuit is at most the weight of a maximum weight independent set and (2) guarantees that the value is at least the weight of a maximum weight independent set.

For the only if-direction, if some monomial would not be multilinear, i.e., include a factor $v^2$ for some vertex $v$, the output would be incorrect when assigning weight $1$ to $v$ and $0$ to other vertices.
If some monomial would be of form $v_1 \cdot \ldots \cdot v_l$, where $\{v_1, \ldots, v_l\}$ is not an independent set the output would be incorrect when assigning weight $1$ to those $v_i$ and $0$ to others.
Finally, if the output polynomial would not include $v_1 \cdot \ldots \cdot v_l$ as a monomial for some independent set $\{v_1, \ldots, v_l\}$ then the circuit would be incorrect when assigning weight $1$ to vertices of this independent set and $-1$ to others.
\end{proof}

An MWIS-polynomial of a graph $G$ is a polynomial $f$ satisfying (1) and (2) in Lemma~\ref{lem:mwispoly}.
An MWIS-circuit of $G$ is a tropical circuit that computes an MWIS-polynomial of $G$.
An MWIS-formula of $G$ is an MWIS-circuit of $G$ that is a tropical formula.

We note that requiring the circuit to work for all real weights is not a strong assumption: 
Any MWIS-circuit that works for weights $\{0, 1\}$ can be turned into an MWIS-circuit that works for weights $\mathbb{R} \cup \{-\infty\}$ by replacing each input variable $v_i$ by $\max(v_i, 0)$.
In particular, the weight of an empty independent set is $0$, so negative weights will never be used.
Our assumption that the only constant available to the circuit is $0$ can be justified by noting that if an output monomial would contain a positive constant the circuit would be incorrect on the all-zero input, and that if an output monomial would contain a negative constant it should also occur without the constant.
In particular, any other constants than $0$ could be replaced by $0$.

Next we make some simple observations on the structure of MWIS-circuits.

\begin{definition2}
A partial MWIS-polynomial is a polynomial $f$ satisfying (1) in Lemma~\ref{lem:mwispoly}.
A partial MWIS-circuit is a tropical circuit computing a partial MWIS-polynomial.
\end{definition2}

Note that by monotonicity of $(\max, +)$ computations we can assume that each gate of an MWIS-circuit computes a partial MWIS-polynomial and therefore each subcircuit is a partial MWIS-circuit.

\begin{definition2}
Let $f$ be a partial MWIS-polynomial.
We denote by $\spp(f)$ the support of $f$, that is, the variables that occur in the monomials of $f$.
\end{definition2}

We also use $\spp(g)$ for a gate $g$ to denote the support of the polynomial computed by the gate.
Note that each monomial of $f$ corresponds to an independent set of $G[\spp(f)]$.

The following property is the basis for proving lower bounds for MWIS-circuits.

\begin{lemma}
\label{lem:prodpro}
Let $f = g \cdot h$ be a partial MWIS-polynomial of a graph $G$.
The sets $N[\spp(g)]$ and $\spp(h)$ are disjoint.
\end{lemma}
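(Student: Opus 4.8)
The plan is to argue by contradiction, converting a hypothetical common element of $N[\spp(g)]$ and $\spp(h)$ into a monomial of $f$ that violates property (1) of Lemma~\ref{lem:mwispoly}. So suppose $u \in N[\spp(g)] \cap \spp(h)$. First I would unfold the definition of the closed neighbourhood: there must be a vertex $w \in \spp(g)$ with $u \in N[w]$, that is, $w = u$ or $\{u, w\} \in E(G)$. By the definition of support, $g$ then has a monomial $M$ in which the variable $w$ occurs, and $h$ has a monomial $N$ in which the variable $u$ occurs. (The corner case $\spp(g) = \emptyset$, for instance when $g$ computes a constant, makes the statement trivially true, so I may assume these monomials exist.)

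Next I would observe that $M \cdot N$ is a monomial of $f = g \cdot h$. Indeed, expanding the tropical product writes $f$ as the tropical sum of all products of a monomial of $g$ with a monomial of $h$, and since tropical addition is $\max$, which is idempotent and has no additive inverses, no monomial is ever cancelled in this sum; hence every such product, in particular $M \cdot N$, occurs as a monomial of $f$. I expect this to be the only slightly delicate point of the argument: one has to note that distinct formal monomials cannot collapse into each other, so the non-multilinear or non-independent monomial we are about to produce really does survive into $f$; everything else is bookkeeping.

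Finally I would derive the contradiction. The variables of $M \cdot N$, counted with multiplicity, contain $w$ (coming from $M$) and $u$ (coming from $N$). If $w = u$, then $u$ occurs in $M \cdot N$ with multiplicity at least two, so $M \cdot N$ is not multilinear. If $w \neq u$, then $\{u, w\} \in E(G)$, so the set of variables occurring in $M \cdot N$ is not an independent set of $G$. In either case $M \cdot N$ is not of the form $v_1 \cdot \ldots \cdot v_l$ with $\{v_1, \ldots, v_l\}$ an independent set of $G$, contradicting the assumption that $f$ is a partial MWIS-polynomial, i.e., satisfies property (1) of Lemma~\ref{lem:mwispoly}. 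Therefore $N[\spp(g)]$ and $\spp(h)$ are disjoint.
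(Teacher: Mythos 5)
Your proof is correct and follows essentially the same route as the paper's: take a witness vertex in $N[\spp(g)] \cap \spp(h)$, split into the cases $w = u$ (yielding a factor $u^2$) and $\{u,w\} \in E(G)$ (yielding a factor $u \cdot w$), and contradict property (1) of Lemma~\ref{lem:mwispoly}. The paper states this more tersely, leaving implicit the observation you spell out that tropical addition ($\max$) admits no cancellation, so the offending monomial genuinely appears in $f$.
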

\begin{proof}
If there was a vertex $v \in \spp(g) \cap \spp(h)$ then $f$ would contain a monomial with a factor $v^2$.
If there was a vertex $v \in \spp(g)$ and $u \in \spp(h)$ with $\{u, v\} \in E(G)$, then there would be a monomial in $f$ containing a factor $u \cdot v$.
\end{proof}

We will say that a partial MWIS-polynomial $f$ or a circuit computing $f$ \emph{computes} an independent set $I$ if $f$ contains the monomial $\prod_{v_i \in I} v_i$.
In particular, an MWIS-polynomial computes every independent set.

\subsection{Lopsided Lovász Local Lemma}
The lopsided Lovász Local Lemma~\cite{DBLP:journals/dam/ErdosS91} (see~\cite{DBLP:books/daglib/0021015} for the general version) is a method for showing that there is a non-zero probability that none of the events in a collection of events hold.
In particular, we use it to show that independent sets satisfying certain requirements exist.

\begin{definition2}
Let $\E_1, \ldots, \E_n$ be events in a probability space.
A graph $\Gamma$ is a negative dependency graph of the events if its vertices are $V(\Gamma) = \{\E_1, \ldots, \E_n\}$ and for all events $\E_i$ and subsets $J \subseteq V(\Gamma) \setminus N(\E_i)$ it holds that $\Pr[\bigcup_{j \in J} \E_j \mid \E_i] \ge \Pr[\bigcup_{j \in J} \E_j]$.
\end{definition2}

In words, the negative dependency graph should capture all negative correlations between the events.

\begin{proposition}[\cite{DBLP:books/daglib/0021015}]
\label{pro:llll}
Let $\E_1, \ldots, \E_n$ be a collection of events with a negative dependency graph $\Gamma$.
If there exists real numbers $x_1, \ldots, x_n$ with $0 < x_i < 1$ such that for each $i$ it holds that $\Pr[\E_i] \le x_i \prod_{\E_j \in N(\E_i)} (1-x_j)$, then $\Pr[\bigcap_{i=1}^n \overline{\E_i}] > 0$.
\end{proposition}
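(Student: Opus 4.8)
Since this is the classical lopsided Lov\'asz Local Lemma, one could simply refer to~\cite{DBLP:books/daglib/0021015}; for completeness I sketch how I would argue it. The plan is to prove, by induction on $|S|$, the following strengthening: for every event $\E_i$ and every set $S \subseteq \{1,\ldots,n\} \setminus \{i\}$ with $\Pr[\bigcap_{j \in S}\overline{\E_j}] > 0$, one has $\Pr[\E_i \mid \bigcap_{j \in S}\overline{\E_j}] \le x_i$. As a preliminary I would record that, for any $J$ disjoint from $N(\E_i)$, the negative dependency inequality $\Pr[\bigcup_{j\in J}\E_j \mid \E_i] \ge \Pr[\bigcup_{j\in J}\E_j]$ is equivalent --- take complements, then multiply and divide by the relevant probabilities --- to the assertion that $\Pr[\E_i \mid \bigcap_{j\in J}\overline{\E_j}] \le \Pr[\E_i]$ whenever $\Pr[\bigcap_{j\in J}\overline{\E_j}] > 0$.

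For the base case $S = \emptyset$ we get $\Pr[\E_i] \le x_i \prod_{\E_j \in N(\E_i)}(1-x_j) \le x_i$. For the inductive step I would partition $S$ into $S_1 = S \cap N(\E_i)$ and $S_2 = S \setminus N(\E_i)$, noting that $\Pr[\bigcap_{j\in S_2}\overline{\E_j}] \ge \Pr[\bigcap_{j\in S}\overline{\E_j}] > 0$. If $S_1 = \emptyset$ the claim is exactly the preliminary reformulation. Otherwise, writing $S_1 = \{j_1,\ldots,j_k\}$, I would use the identity
\[
\Pr\Bigl[\E_i \;\Big|\; \bigcap_{j \in S}\overline{\E_j}\Bigr]
= \frac{\Pr\bigl[\,\E_i \cap \bigcap_{j\in S_1}\overline{\E_j} \;\big|\; \bigcap_{j\in S_2}\overline{\E_j}\,\bigr]}
       {\Pr\bigl[\,\bigcap_{j\in S_1}\overline{\E_j} \;\big|\; \bigcap_{j\in S_2}\overline{\E_j}\,\bigr]}
\]
and bound numerator and denominator separately. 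The numerator is at most $\Pr[\E_i \mid \bigcap_{j\in S_2}\overline{\E_j}] \le \Pr[\E_i]$, using the preliminary since $S_2 \cap N(\E_i) = \emptyset$. The denominator I would expand as the telescoping product $\prod_{l=1}^{k}\Pr[\overline{\E_{j_l}} \mid \bigcap_{m<l}\overline{\E_{j_m}} \cap \bigcap_{j\in S_2}\overline{\E_j}]$; its $l$-th factor equals $1 - \Pr[\E_{j_l} \mid \cdots]$, and applying the induction hypothesis to $\E_{j_l}$ with the index set $\{j_1,\ldots,j_{l-1}\}\cup S_2$ --- which has size strictly less than $|S|$ and, being contained in $S$, induces a positive-probability conditioning event and does not contain $j_l$ --- bounds that factor below by $1 - x_{j_l}$. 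Hence the denominator is at least $\prod_{l=1}^{k}(1-x_{j_l}) \ge \prod_{\E_j \in N(\E_i)}(1-x_j)$, and combining this with the numerator bound and the hypothesis $\Pr[\E_i] \le x_i \prod_{\E_j\in N(\E_i)}(1-x_j)$ gives $\Pr[\E_i \mid \bigcap_{j\in S}\overline{\E_j}] \le x_i$, closing the induction.

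To finish, I would telescope over all indices:
\[
\Pr\Bigl[\bigcap_{i=1}^{n}\overline{\E_i}\Bigr]
= \prod_{i=1}^{n}\Pr\Bigl[\overline{\E_i} \;\Big|\; \bigcap_{j<i}\overline{\E_j}\Bigr]
= \prod_{i=1}^{n}\Bigl(1 - \Pr\Bigl[\E_i \;\Big|\; \bigcap_{j<i}\overline{\E_j}\Bigr]\Bigr)
\ge \prod_{i=1}^{n}(1-x_i) > 0,
\]
where a short parallel induction shows that each conditioning event $\bigcap_{j<i}\overline{\E_j}$ has positive probability, so that every conditional probability above is well-defined: if $\Pr[\bigcap_{j<i}\overline{\E_j}] > 0$ then the strengthened claim gives $\Pr[\overline{\E_i}\mid\bigcap_{j<i}\overline{\E_j}] \ge 1 - x_i > 0$, hence $\Pr[\bigcap_{j\le i}\overline{\E_j}] > 0$. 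The step I expect to need the most care is the bookkeeping in the inductive proof of the strengthened claim --- splitting $S$ according to adjacency to $\E_i$ in $\Gamma$, using the negative dependency hypothesis only on the non-neighbor part while reserving the induction hypothesis for the neighbor part, and verifying that every recursive invocation is on a strictly smaller index set whose conditioning event still has positive probability. Once this structure is set up, each individual inequality is immediate.
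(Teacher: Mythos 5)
Your argument is correct, and it is the standard inductive proof of the lopsided Lov\'asz Local Lemma; the paper itself offers no proof of this proposition but simply imports it from the cited reference, so there is nothing to diverge from. The one point that distinguishes the lopsided version from the ordinary one --- replacing the independence-based bound on the numerator by the negative-dependency inequality $\Pr[\E_i \mid \bigcap_{j\in S_2}\overline{\E_j}] \le \Pr[\E_i]$ for $S_2$ disjoint from $N(\E_i)$ --- is exactly where you place it, and your bookkeeping (strictly decreasing $|S|$, positivity of every conditioning event) is sound.
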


\subsection{Hitting Vertex Sets with Independent Sets}
We prove a lemma which captures our use of the Local Lemma in Theorems~\ref{the:tw1} and~\ref{the:td}.
We spell out the constants to emphasize that they are not particularly high, although noting that a more careful proof could improve them a bit.

\begin{lemma}
\label{lem:hitting}
Let $G$ be a graph with maximum degree $d$ and $\F$ a family of vertex subsets of $G$, each member of $\F$ containing at least $k$ vertices.
If $6|\F| \le e^{k/(6d)}$, then there exists an independent set of $G$ that intersects all sets in $\F$.
\end{lemma}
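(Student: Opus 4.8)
The plan is to argue by the probabilistic method: I would put a distribution on the independent sets of $G$ and show that with positive probability the sampled independent set $I$ meets every member of $\F$. For each $S \in \F$ let $\mathcal{E}_S$ be the bad event ``$I \cap S = \emptyset$''; the goal is $\Pr[\bigcap_{S \in \F} \overline{\mathcal{E}_S}] > 0$, which I would obtain from Proposition~\ref{pro:llll}. A convenient choice of $I$ is the following: draw independent uniform reals $X_v \in [0,1]$ for $v \in V(G)$ and let $I = \{v : X_v < X_u \text{ for every } u \in N(v)\}$ be the set of local minima. This is always an independent set, $\Pr[v \in I] = 1/(\deg(v)+1) \ge 1/(d+1)$ for every vertex $v$, and, crucially, $\mathcal{E}_S$ is a function only of the variables $\{X_v : v \in N[S]\}$, since every vertex of $S$ has all of its neighbours in $N[S]$. (The alternative of including each vertex in a set $A$ independently with probability $\Theta(1/d)$ and taking $I = A \setminus N(A)$ works equally well.)

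The heart of the proof is to bound $\Pr[\mathcal{E}_S]$ for a fixed $S \in \F$, and what one needs is a bound of the form $\Pr[\mathcal{E}_S] \le e^{-\Omega(k/d)}$, i.e.\ with only a \emph{linear} loss in $d$. To get this I would reveal the values on $N[S]$ one vertex at a time in increasing order of $X$. The point is that the first time an unrevealed vertex $v \in S$ is revealed while none of its neighbours has yet been revealed, $v$ is a local minimum of $G$ (all its neighbours lie in $N[S]$ and hence have larger values), so on that outcome $S$ is already hit. Letting $u_t$ be the number of vertices of $S$ that, after $t$ reveals, are still unrevealed and have no revealed neighbour, a short supermartingale computation gives $\Pr[\mathcal{E}_S] \le \mathbb{E}\,\exp(-\tfrac{1}{|N[S]|}\sum_t u_t)$, and on the event $\mathcal{E}_S$ the quantity $\sum_t u_t$ equals $\sum_{v \in S}(\text{rank of the first revealed neighbour of } v)$, whose expectation is at least $|S|\cdot|N[S]|/(d+1)$; together with $|N[S]| \le (d+1)|S|$ and a lower-tail estimate this yields $\Pr[\mathcal{E}_S] \le e^{-\Omega(|S|/d)} \le e^{-\Omega(k/d)}$. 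I expect this to be the main obstacle: the naive route of first picking vertices of $S$ pairwise at distance more than two, so that the events ``$v \in I$'' become independent, only produces $e^{-\Omega(k/d^2)}$ (or $e^{-\Omega(k/d^3)}$ if one also passes first to an independent subset of $G[S]$), which is exactly the ``worse dependency on $d$'' alluded to in the introduction; getting the exponent down to $k/d$ is where the work lies. A clean way to sidestep a separate concentration argument is to preprocess $\F$ by repeatedly committing to $I$ any vertex that lies in an unusually large number of still-uncovered members of $\F$ and deleting its closed neighbourhood everywhere; this keeps every remaining set of size $\ge k/2$ and makes the family spread out, which both simplifies the single-set estimate and controls the dependency graph below.

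Finally I would assemble the pieces with Proposition~\ref{pro:llll}. Because $\mathcal{E}_S$ depends only on the coordinates in $N[S]$, two events $\mathcal{E}_S, \mathcal{E}_{S'}$ with $N[S] \cap N[S'] = \emptyset$ are genuinely independent, so one may take the negative dependency graph $\Gamma$ on $\F$ to join $S$ and $S'$ exactly when $N[S] \cap N[S'] \neq \emptyset$; the lopsidependency condition is immediate from the product structure of the measure, since far-away events are determined by disjoint coordinates. With $\Pr[\mathcal{E}_S] \le e^{-\Omega(k/d)}$ in hand, the hypothesis $6|\F| \le e^{k/(6d)}$ leaves ample room: taking all of the numbers $x_S$ of Proposition~\ref{pro:llll} equal to a common value of order $\Pr[\mathcal{E}_S]$, the inequality $\Pr[\mathcal{E}_S] \le x_S \prod_{\mathcal{E}_{S'} \in N_\Gamma(\mathcal{E}_S)} (1 - x_{S'})$ holds because $\deg_\Gamma(S) \le |\F|$ and $\Pr[\mathcal{E}_S]$ is already at most roughly $1/|\F|$. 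A point of $\bigcap_{S}\overline{\mathcal{E}_S}$, together with the vertices fixed during preprocessing, is the required independent set meeting every member of $\F$.
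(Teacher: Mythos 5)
Your high-level strategy (probabilistic method plus the local lemma, needing a bound that loses only a linear factor in $d$) is on target, but your route diverges from the paper's at the decisive step, and the divergence opens a genuine gap.

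The paper does not try to sample an independent set directly. It samples an unconstrained random subset $R$ by putting each vertex in $R$ independently with probability $p=1/(2d)$, and then uses \emph{two} families of bad events: an edge event $\E_e$ for each edge $e$ (both endpoints of $e$ fall in $R$) and a set event $\E_A$ for each $A\in\F$ (no vertex of $A$ falls in $R$). The pivotal structural observation is that the negative dependency graph is bipartite: edge events are pairwise non-negatively correlated, set events are pairwise non-negatively correlated, and the only negative correlations are between an $\E_e$ and an $\E_A$ sharing a vertex. This makes the degree in the dependency graph of each $\E_A$ at most $|A|d$ and of each $\E_e$ at most $|\F|$, and it makes the probability estimates trivial: $\Pr[\E_e]=p^2$ and $\Pr[\E_A]=(1-p)^{|A|}$, with no correlation issue whatsoever, because the events are over a product measure and each is a monotone function of disjoint coordinates within its class. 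Conditioned on all bad events failing, $R$ is simultaneously an independent set and a transversal of $\F$.

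Your approach instead insists on sampling an honest independent set (local minima of i.i.d.\ labels, or $A\setminus N(A)$), and then you must bound $\Pr[I\cap S=\emptyset]\le e^{-\Omega(|S|/d)}$ for that more complicated law. You correctly flag this as the main obstacle, but it is not resolved in the proposal. In the local-minima model the events $\{v\in I\}$ for $v\in S$ can be \emph{positively} correlated (two vertices of $S$ sharing a high-degree neighbour both ``want'' that neighbour's label to be large), so the naive product bound $\prod_{v\in S}(1-1/(d+1))$ does not follow; and the claimed supermartingale inequality $\Pr[\E_S]\le\mathbb{E}\exp(-\frac{1}{|N[S]|}\sum_t u_t)$, the subsequent expectation lower bound, and the ``lower-tail estimate'' are all left as assertions. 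I am not convinced the displayed supermartingale inequality is even correct as stated (the survival indicator and the exponential term interact; the bound one can cleanly extract from the revealing process is not obviously of that form), and even granting an $e^{-\Omega(|S|/d)}$ bound, the dependency graph you propose on $\F$ (adjacency when $N[S]\cap N[S']\ne\emptyset$) has degree up to $|\F|-1$, so the LLL condition forces $\Pr[\E_S]\lesssim 1/(e|\F|)$; whether the unspecified $\Omega$ constant beats the lemma's explicit $1/6$ is not checked. The auxiliary ``preprocess by committing heavy vertices'' idea is a second, also unverified, mechanism layered on top. In short: the key idea you missed is that one need not sample an independent set at all — introducing edge events as bad events and exploiting the bipartite negative dependency structure removes the correlation problem entirely and is what makes the proof short.
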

\begin{proof}
We assume $d \ge 2$ as the lemma is easy to verify for $d \le 1$.
We use the Local Lemma to construct such an independent set.
We let each vertex be in the independent set with probability $p = 1/(2d)$.
Our bad events are $\E_e$ for each each edge $e$ indicating that both endpoints of $e$ are selected in the independent set, and $\E_A$ for each $A \in \mathcal{F}$ indicating that the independent set does not intersect $A$.
The negative dependency graph is a bipartite graph connecting $\E_e$ to $\E_A$ if at least one of the endpoints of $e$ is in $A$.
In particular, note that the edge events $\E_e$ have non-negative correlation with each other and the vertex set events $\E_A$ also have non-negative correlation with each other.
For all edge events $\E_e$ we choose $x_e = 1/(3d^2+1)$ and for all vertex set events $\E_A$ we choose $x_A = 1/(5|\F|+1)$.
Now, by Proposition~\ref{pro:llll}, it suffices to verify that 
\begin{equation}
\Pr[\E_e] = p^2 \le x_e (1-x_A)^{|\F|}
\end{equation}
and
\begin{equation}
\Pr[\E_A] = (1-p)^{|A|} \le x_A (1-x_e)^{|A|d}
\end{equation}
hold whenever $6|\F| \le e^{|A|/6d}$.

For (1), a lower bound for the right hand side is $e^{-1/5}/(3d^2+1)$, which can be verified to be greater than $p^2 = 1/(4d^2)$ when $d\ge2$.
For (2), an upper bound for the left hand side is $e^{-|A|/(2d)}$, and a lower bound for the right hand side is $x_A e^{-|A|d/(3d^2)}$, implying that (2) holds if $e^{-|A|/(2d)} e^{|A|/(3d)} \le x_A$.
This simplifies to $e^{-|A|/(6d)} \le x_A \Leftrightarrow e^{|A|/(6d)} \ge 5|\F|+1$.
\end{proof}

\section{Treewidth and MWIS-Circuits}
\label{sec:tw}
In this section we prove lower bounds for MWIS-circuits parameterized by treewidth, i.e., Theorems~\ref{the:tw1} and~\ref{the:tw2}.

We use a witness of high treewidth due to Robertson-Seymour treewidth approximation algorithm~\cite{DBLP:journals/jal/RobertsonS86}.
A separation of a graph $G$ is an ordered triple of vertex sets $(A,S,B)$ such that $A,S,B$ are disjoint, $A \cup S \cup B = V(G)$, and no vertex of $A$ is adjacent to a vertex of $B$.
The order of a separation $(A,S,B)$ is $|S|$.
A separation $(A,S,B)$ is a balanced separation of a vertex set $X \subseteq V(G)$ if $|A \cap X| \le 2|X|/3$ and $|B \cap X| \le 2|X|/3$.

\begin{lemma}[\cite{DBLP:journals/jal/RobertsonS86}]
\label{lem:twlb}
If a graph $G$ has treewidth at least $4k$, then there is a vertex set $X \subseteq V(G)$ such that any balanced separation of $X$ in $G$ has order at least $k$.
\end{lemma}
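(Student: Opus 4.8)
The plan is to prove the contrapositive: assuming that \emph{every} vertex set $X \subseteq V(G)$ admits a balanced separation of order at most $k-1$, I would build a tree decomposition of $G$ of width at most $4k-1$, which shows $tw(G) < 4k$. The decomposition would be produced by a top-down recursive procedure $\mathrm{Dec}(W,R)$ whose input is an ``interface'' $W \subseteq V(G)$ with $|W| \le 3k$ together with a set $R \subseteq V(G) \setminus W$ that is a union of connected components of $G - W$; the procedure returns a tree decomposition of $G[W \cup R]$ in which the root bag contains $W$ and every bag has size at most $4k$. The lemma then follows by calling $\mathrm{Dec}(\emptyset, V(G))$.

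Inside $\mathrm{Dec}(W,R)$: if $|W \cup R| \le 4k$, output the single bag $W \cup R$. Otherwise $|R| = |W\cup R| - |W| > 4k - |W| \ge 3k - |W|$, so I can choose a set $X$ with $W \subseteq X \subseteq W \cup R$ and $|X| = 3k$, padding $W$ by arbitrary vertices of $R$. Applying the hypothesis to $X$ gives a balanced separation $(A,S,B)$ of $X$ with $|S| \le k-1$, and replacing $A,S,B$ by their intersections with $W \cup R$ turns it into a separation of $G[W\cup R]$ that is still balanced for $X$ and still of order at most $k-1$. I would then recurse on the two instances $\bigl((W\cap A)\cup S,\; R\cap A\bigr)$ and $\bigl((W\cap B)\cup S,\; R\cap B\bigr)$: these are legitimate inputs because $R \cap A$ and $R \cap B$ are unions of components of $G$ minus the corresponding new interface, and each new interface has size at most $|X \cap A| + |S| \le 2|X|/3 + (k-1) \le 3k-1$. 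The bag of the current node would be $W \cup S$, of size at most $3k + (k-1) < 4k$. A few routine checks then confirm that hanging the two returned decompositions below this bag gives a valid tree decomposition of $G[W\cup R]$: each new interface lies both in $W \cup S$ and in the corresponding child's root bag, every vertex of $W \cup R$ is covered, every edge of $G[W\cup R]$ has both endpoints in $(W\cup R)\cap(A\cup S)$ or both in $(W\cup R)\cap(B\cup S)$, and the subtree-connectivity condition holds.

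The step I expect to be the main obstacle is \emph{termination}, i.e.\ showing that the recursion makes genuine progress. Taking $|W\cup R|$ as the measure, a short computation gives $|(W\cap A)\cup S| + |R\cap A| = |(W\cup R)\cap A| + |S| = |W\cup R| - |B|$, and symmetrically $|W\cup R| - |A|$ on the other branch, so both recursive instances are strictly smaller precisely when the separation is \emph{proper}, i.e.\ $A \ne \emptyset$ and $B \ne \emptyset$. This is exactly why $X$ is chosen of size $3k$: if $B$ were empty we would have $X \subseteq A \cup S$, forcing $|S| \ge |X| - |X\cap A| \ge |X|/3 = k$ and contradicting $|S| \le k-1$. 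Hence every recursive call strictly decreases $|W\cup R|$, the recursion terminates with all bags of size at most $4k$, and the width bound --- and thus the lemma --- follows; the only degenerate case, $k = 0$, is trivial.
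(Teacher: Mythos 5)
The paper does not prove this lemma at all; it is stated with a citation to Robertson--Seymour and used as a black box. Your reconstruction is correct and is essentially the standard argument that the cited source uses: run the balanced-separator hypothesis recursively on an interface set $W$ of size at most $3k$, padded to a set $X$ of size exactly $3k$, split along the returned $(A,S,B)$, and glue the resulting decompositions under the bag $W\cup S$. All the key arithmetic checks out: the new interfaces have size at most $|X\cap A|+|S|\le 2k+(k-1)=3k-1$, the bag has size at most $3k+(k-1)<4k$, and your termination argument is exactly the point of choosing $|X|=3k$ --- if the $B$-side of the (restricted) separation were empty, $X\subseteq A\cup S$ together with $|X\cap A|\le 2|X|/3$ would force $|S|\ge |X|/3=k$, contradicting $|S|\le k-1$; the symmetric argument handles empty $A$. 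The one routine detail you gesture at but do not spell out, namely that $R\cap A$ is a union of connected components of $G$ minus the new interface $(W\cap A)\cup S$, does hold: any neighbour of a vertex of $R\cap A$ lies in $W\cup R$ (since $R$ is a union of components of $G-W$) and not in $B$ (since $A,S,B$ separate), hence in $(W\cap A)\cup(R\cap A)\cup S$. So the proposal is a complete and correct proof of the lemma the paper imports.
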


The next lemma is our main tool to connect circuit complexity with treewidth.
This lemma is an adaptation of a classical circuit decomposition lemma (e.g. Theorem~1 in~\cite{DBLP:journals/siamcomp/Hyafil79}, Lemma~3 in~\cite{DBLP:journals/tcs/Valiant80}).
In our applications the vertex set $X$ will be the set given by Lemma~\ref{lem:twlb}.

\begin{lemma}
\label{lem:decomp}
Let $G$ be a graph and $X \subseteq V(G)$ with $|X| \ge 2$.
If there is an MWIS-circuit of $G$ with $\tau$ gates, then we can write an MWIS-polynomial of $G$ as $g_1 \cdot h_1 + \ldots + g_\tau \cdot h_\tau$, where for all $i$ it holds that $|\spp(g_i) \cap X| \le 2|X|/3$ and $|\spp(h_i) \cap X| \le 2|X|/3$.
\end{lemma}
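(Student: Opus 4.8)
Our plan is to adapt the classical circuit-decomposition lemma to the tropical setting by working with the complexity measure $\mu(g) := |\spp(g) \cap X|$ assigned to each gate $g$ (and more generally $\mu(p) = |\spp(p) \cap X|$ for a partial MWIS-polynomial $p$). The first step is to record how $\mu$ propagates through the circuit: an input gate has $\mu \le 1 \le 2|X|/3$ (using $|X| \ge 2$); at a $\max$-gate $g$ with children $a,b$ we have $\spp(a) \cup \spp(b) \supseteq \spp(g)$ and $\spp(a),\spp(b) \subseteq \spp(g)$, so $\mu(a),\mu(b) \le \mu(g) \le \mu(a)+\mu(b)$; and at a $+$-gate $g = a \cdot b$ Lemma~\ref{lem:prodpro} makes $\spp(a)$ and $\spp(b)$ disjoint, so $\mu(g) = \mu(a)+\mu(b)$. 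Since the given circuit computes an MWIS-polynomial, its output gate $o$ has $\spp(o) = V(G)$, hence $\mu(o) = |X| > 2|X|/3$.

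We would then argue by induction on the number of gates, applied to partial MWIS-circuits (every subcircuit is one). If the output gate $o$ already has $\mu(o) \le 2|X|/3$, then the single term $f \cdot \mathbf{1}$ is a valid decomposition. Otherwise, starting from $o$ we descend, at each step passing to a child of maximum $\mu$-value, and we stop at the first gate $g^*$ reached with $\mu(g^*) \le 2|X|/3$ (the descent is finite and $g^* \neq o$). Along the descent the $\mu$-values stay above $2|X|/3$ until the last step, and at every gate the maximum-$\mu$ child has $\mu$-value at least $\frac12(\mu(a)+\mu(b))$, i.e.\ at least half that of the gate; therefore $\mu(g^*) > |X|/3$. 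So $g^*$ is a gate with $|X|/3 < \mu(g^*) \le 2|X|/3$.

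Next we would split $f$ at $g^*$. Separating the tropical monomials of $f$ according to whether the parse producing them routes through $g^*$, one writes $f = f_{g^*} \cdot h \ \oplus\ r$, where $r$ is the part coming from parses that avoid $g^*$ and $h$ collects the corresponding cofactors (including, after absorbing them into $h$, the higher-order contributions of parses that use $g^*$ more than once, which remain divisible by $f_{g^*}$). Here $r$ is computed by the circuit obtained from the given one by deleting $g^*$ and propagating (a $+$-gate fed by $g^*$ becomes dead and is removed, a $\max$-gate fed by $g^*$ is replaced by a direct connection to its other input), which has at most $\tau - 1$ gates, so the induction hypothesis applies and expresses $r$ as a sum of at most $\tau - 1$ balanced terms. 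The factor $f_{g^*}$ is balanced because $\mu(g^*) \le 2|X|/3$. For the factor $h$: $f_{g^*} \cdot h$ consists of monomials of $f$, so it is a partial MWIS-polynomial, and Lemma~\ref{lem:prodpro} forces $\spp(h)$ to be disjoint from $N[\spp(g^*)]$; in particular $\spp(h) \cap X \subseteq X \setminus \spp(g^*)$, so $|\spp(h) \cap X| \le |X| - \mu(g^*) < 2|X|/3$. This last inequality is the crux, and it is precisely where the bound $\mu(g^*) > |X|/3$ obtained from the descent is needed. Altogether $f$ is written with at most $1 + (\tau - 1) = \tau$ balanced terms.

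The step I expect to need the most care is making the split $f = f_{g^*}\cdot h \oplus r$ rigorous. Our circuit model does not include the additive identity $-\infty$ as a constant, so one cannot literally substitute it at $g^*$; instead the identity and the stated properties of $h$ and $r$ should be justified at the level of parse trees (monomials), checking that the set of parses through $g^*$ factors as a product of a parse of the subcircuit at $g^*$ with a parse of the rest, and that parses using $g^*$ multiple times can be folded into the term $f_{g^*}\cdot h$ without breaking either balancedness (since $f_{g^*}\cdot h$ stays a partial MWIS-polynomial) or the circuit-size bound for $r$. The remaining points — the $\mu$-propagation rules and the fact that deleting $g^*$ with propagation yields a genuine tropical circuit with strictly fewer gates computing $r$ — are routine.
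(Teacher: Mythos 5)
Your descent argument (tracking $\mu(g)=|\spp(g)\cap X|$, passing to the child of larger $\mu$ until first reaching a gate $g^*$ with $|X|/3<\mu(g^*)\le 2|X|/3$) is exactly what the paper does, and your balancedness bound $|\spp(h)\cap X|\le |X|-\mu(g^*)<2|X|/3$ is also the paper's computation. The key difference is in how the factor $h$ is obtained, and this is where your sketch has a genuine problem. You define $h$ to be the exact cofactor extracted at the level of parse trees, and then assert that ``$f_{g^*}\cdot h$ consists of monomials of $f$, so it is a partial MWIS-polynomial, and Lemma~\ref{lem:prodpro} forces $\spp(h)$ to be disjoint from $N[\spp(g^*)]$.'' This assertion is false: if $m_2$ is a cofactor arising from a parse that combines $m_1'\in f_{g^*}$ with $m_2$, nothing forces $m_1\cup m_2$ to be a monomial of $f$ for other $m_1\in f_{g^*}$, so $f_{g^*}\cdot h$ is in general a strict superset of the through-$g^*$ part of $f$, and you cannot invoke Lemma~\ref{lem:prodpro} on it before knowing it is a partial MWIS-polynomial — which is precisely what you were trying to establish. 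The correct route is to argue from the circuit structure: at each $+$-gate on the path from $g^*$ to the output, Lemma~\ref{lem:prodpro} applied \emph{at that gate} shows the side child's support avoids $N[\spp(g^*)]$, so every cofactor monomial lies in $V(G)\setminus N[\spp(g^*)]$. Once you know that, $f_{g^*}\cdot h$ is a partial MWIS-polynomial even with the extra monomials.

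The paper sidesteps the whole difficulty with a cleaner move that your proposal misses: it does not attempt the identity $f=f_{g^*}\cdot h\oplus r$ at all. Instead it writes the new MWIS-polynomial as $f_{g^*=-\infty}\ \oplus\ f_{g^*}\cdot g\ \oplus\ e$, where $g$ is taken to be the \emph{full} MWIS-polynomial of $G\setminus N[\spp(f_{g^*})]$ (and $e$ accumulates the previously extracted products). Because the true cofactors are, by the circuit-structure argument above, independent sets of $G\setminus N[\spp(f_{g^*})]$, they are all covered by $g$; so coverage holds, every monomial of $f_{g^*}\cdot g$ is automatically an independent set, and $\spp(g)\cap\spp(f_{g^*})=\emptyset$ gives balancedness without any parse-tree bookkeeping. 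Two of your flagged worries are also non-issues: substituting $-\infty$ at $g^*$ is a well-defined operation on the computed polynomial even though $-\infty$ is not a circuit constant (you then build the $\le\tau-1$-gate circuit by deletion and propagation, just as you say), and a parse tree can in fact use $g^*$ at most once whenever $\spp(g^*)\neq\emptyset$, since two incomparable occurrences would sit under the two children of some $+$-gate whose supports both contain $\spp(g^*)$, contradicting Lemma~\ref{lem:prodpro}. So the higher-order terms you worry about do not arise.
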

\begin{proof}
Let $f + e$ be an MWIS-polynomial of $G$, where $f$ can be computed by a tropical circuit with $\tau$ gates. (The term $e$ is here for the induction argument. In the first step we can assume it to be empty.)
We will show that there is an MWIS-polynomial $f' + g \cdot h + e$ of $G$, where $f'$ can be computed by a tropical circuit with $\tau-1$ gates, and $|\spp(g) \cap X| \le 2|X|/3$ and $|\spp(h) \cap X| \le 2|X|/3$.
The lemma follows from this by induction.

If $|\spp(f) \cap X| \le 2|X|/3$ we are done.
Otherwise, we traverse the circuit computing $f$ down starting from the output gate, always choosing the one of the two child gates whose support has larger intersection with $X$, until we reach a gate $v$ computing a polynomial $f_v$ with $|X|/3 \le |\spp(f_v) \cap X| \le 2|X|/3$.
Let $f_{v=-\infty}$ be the polynomial computed by the circuit when the value of the gate $v$ is set to $-\infty$.
Now we can write an MWIS-polynomial of $G$ as $f_{v=-\infty} + f_v \cdot g + e$, for example by letting $g$ be an MWIS-polynomial of $G \setminus N[\spp(f_v)]$.
Now, we observe that $f_{v=-\infty}$ can be computed by a circuit with $\tau-1$ gates.
We also observe that the supports of $f_v$ and $g$ cannot intersect, and therefore $|\spp(g) \cap X| \le 2|X|/3$.
\end{proof}

\subsection{Proof of Theorem~\ref{the:tw1}}
Now we complete the proof of Theorem~\ref{the:tw1} by putting Lemmas~\ref{lem:hitting}, \ref{lem:twlb}, and~\ref{lem:decomp} together.

\begin{lemma}
Let $G$ be a graph with maximum degree $d$ and treewidth at least $4k$.
Any MWIS-circuit of $G$ has at least $e^{k/(6d)}/6$ gates.
\end{lemma}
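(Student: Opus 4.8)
The plan is to derive a contradiction from the assumption that $G$ has an MWIS-circuit with $\tau < e^{k/(6d)}/6$ gates, by combining Lemmas~\ref{lem:twlb}, \ref{lem:decomp}, and~\ref{lem:hitting}. We may assume $k \ge 2$, since otherwise $e^{k/(6d)}/6 < 1$ and the bound holds trivially.

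First I would invoke Lemma~\ref{lem:twlb} to obtain a vertex set $X \subseteq V(G)$ every balanced separation of which has order at least $k$. Since $(\emptyset, X, V(G) \setminus X)$ is itself a balanced separation of $X$, of order $|X|$, this forces $|X| \ge k \ge 2$, so Lemma~\ref{lem:decomp} is applicable and yields an MWIS-polynomial $f = g_1 \cdot h_1 + \ldots + g_\tau \cdot h_\tau$ of $G$ with $|\spp(g_i) \cap X| \le 2|X|/3$ and $|\spp(h_i) \cap X| \le 2|X|/3$ for every $i$.

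Next, for each term I would turn the product structure into a balanced separation. By Lemma~\ref{lem:prodpro} there is no edge between $\spp(g_i)$ and $\spp(h_i)$, so with $R_i := V(G) \setminus (\spp(g_i) \cup \spp(h_i))$ the triple $(\spp(g_i),\, R_i,\, \spp(h_i))$ is a separation of $G$, and by the bounds from Lemma~\ref{lem:decomp} it is a balanced separation of $X$. Hence Lemma~\ref{lem:twlb} gives $|R_i| \ge k$. Now I apply Lemma~\ref{lem:hitting} to the family $\F = \{R_1, \ldots, R_\tau\}$: each $R_i$ has at least $k$ vertices and $6|\F| \le 6\tau < e^{k/(6d)}$, so there is an independent set $I$ of $G$ intersecting every $R_i$.

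Finally I would extract the contradiction. Since $I$ is independent and $f$ is an MWIS-polynomial, the monomial $\prod_{v \in I} v$ occurs in $f$, hence in some $g_i \cdot h_i$; but every monomial of $g_i \cdot h_i$ uses only variables from $\spp(g_i) \cup \spp(h_i) = V(G) \setminus R_i$, whereas $I$ meets $R_i$ — a contradiction, so $\tau \ge e^{k/(6d)}/6$. I expect the only mildly delicate points to be checking $|X| \ge 2$ so that Lemma~\ref{lem:decomp} can be invoked, and verifying that the separation read off from each product term is balanced with respect to the \emph{same} set $X$ (so that Lemma~\ref{lem:twlb} can be reused for every $i$); the remainder is bookkeeping, since the probabilistic core of the argument is already encapsulated in Lemma~\ref{lem:hitting}.
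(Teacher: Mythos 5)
Your proposal is correct and follows essentially the same route as the paper's own proof: Lemma~\ref{lem:twlb} to get a hard set $X$, Lemma~\ref{lem:decomp} to decompose the circuit into $\tau$ products, Lemma~\ref{lem:prodpro} to read off a balanced separation $(\spp(g_i), R_i, \spp(h_i))$ of $X$ from each product with $|R_i|\ge k$, and Lemma~\ref{lem:hitting} to find an independent set hitting every $R_i$, contradicting that some $g_i\cdot h_i$ must compute it. The only additions are the explicit checks that $|X|\ge 2$ (so Lemma~\ref{lem:decomp} applies) and that $k\le 1$ is trivial, which the paper leaves implicit; these are harmless.
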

\begin{proof}
Suppose there is an MWIS-circuit of $G$ with $\tau$ gates.
By Lemma~\ref{lem:twlb} there is a vertex set $X \subseteq V(G)$ that does not admit a balanced separation of order less than $k$.
By Lemma~\ref{lem:decomp} we can write an MWIS-polynomial of $G$ as $g_1 \cdot h_1 + \ldots + g_\tau \cdot h_\tau$, where for all $i$ it holds that $|\spp(g_i) \cap X| \le 2|X|/3$ and $|\spp(h_i) \cap X| \le 2|X|/3$.
Now, by Lemma~\ref{lem:prodpro} each multiplication $g_i \cdot h_i$ defines a balanced separation $(\spp(g_i), V(G) \setminus \spp(g_i \cdot h_i), \spp(h_i))$ of $X$.
The order of such a separation is $|V(G) \setminus \spp(g_i \cdot h_i)|$, and therefore $|V(G) \setminus \spp(g_i \cdot h_i)| \ge k$.
Note that $g_i \cdot h_i$ does not compute an independent set $I$ if $I$ intersects $V(G) \setminus \spp(g_i \cdot h_i)$.
Therefore, by letting $\F$ be the collection of vertex sets $\{V(G) \setminus \spp(g_1 \cdot h_1), \ldots, V(G) \setminus \spp(g_\tau \cdot h_\tau)\}$, Lemma~\ref{lem:hitting} shows that if $6 \tau \le e^{k/(6d)}$ we can construct an independent set that is not computed by any of the multiplications, contradicting the assumption that we have an MWIS-circuit.
\end{proof}

\subsection{Proof of Theorem~\ref{the:tw2}}
An \emph{induced minor model} of a graph $H$ in a graph $G$ is a function $f : V(H) \to 2^{V(G)} \setminus \{\emptyset\}$, where $2^{V(G)}$ denotes the power set of $V(G)$, satisfying that
\begin{enumerate}
\item the sets $f(u)$ and $f(v)$ are disjoint for $u \neq v$,
\item for each $v \in V(H)$ the induced subgraph $G[f(v)]$ is connected, and
\item $\{u, v\} \in E(H)$ if and only if $N(f(u))$ intersects $f(v)$.
\end{enumerate}

A graph $G$ contains a graph $H$ as an induced minor if and only if there is an induced minor model of $H$ in $G$.
For $v \in V(H)$ we call the induced subgraphs $G[f(v)]$ \emph{clusters}.

First, we ensure that the maximum degree of each cluster is bounded.
\begin{lemma}
\label{lem:lowdg_min}
Let $G$ be a graph that contains a graph $H$ with maximum degree $d$ as an induced minor.
There is an induced minor model $f$ of $H$ in $G$ such that the maximum degree of each cluster $G[f(v)]$ is at most $d$.
\end{lemma}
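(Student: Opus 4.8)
The plan is to start from an arbitrary induced minor model of $H$ in $G$, pin down at most $d$ contact vertices in each cluster, and then shrink every cluster to an inclusion-minimal connected set that still contains its contacts; the point will be that such a minimal cluster automatically has maximum degree at most $d$.

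In detail, let $f$ be any induced minor model of $H$ in $G$. For every edge $\{u,w\}\in E(H)$, property~(3) of the model gives a vertex of $f(w)$ lying in $N(f(u))$, that is, an edge of $G$ with one endpoint $a\in f(u)$ and the other $b\in f(w)$; fix one such pair $(a,b)$ for each edge of $H$. For every $v\in V(H)$, let $S_v\subseteq f(v)$ collect the $f(v)$-endpoints of the pairs fixed for the edges of $H$ incident to $v$; since $v$ is incident to at most $d$ edges of $H$, we have $|S_v|\le d$. Now let $f'(v)$ be any inclusion-minimal set with $S_v\subseteq f'(v)\subseteq f(v)$ and $G[f'(v)]$ connected, which exists because $f(v)$ itself has these properties.

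Next I would verify that $f'$ is again an induced minor model of $H$ in $G$. Disjointness of the clusters is inherited from $f$ since $f'(v)\subseteq f(v)$, and $G[f'(v)]$ is connected by construction. For property~(3): if $\{u,v\}\in E(H)$, the pair fixed for $\{u,v\}$ has its endpoints in $S_u\subseteq f'(u)$ and $S_v\subseteq f'(v)$, so $N(f'(u))$ meets $f'(v)$; conversely, any vertex $z\in f'(v)$ with a neighbour in $f'(u)\subseteq f(u)$ lies in $f(v)$ (hence is not in $f(u)$), so $z\in N(f(u))\cap f(v)$, which is empty whenever $\{u,v\}\notin E(H)$ because $f$ is a model. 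Thus $f'$ has exactly the adjacencies of $f$, namely those of $H$, and only the degrees of the clusters remain to be controlled.

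The main work is to bound $\Delta(G[f'(v)])$. Write $W=f'(v)$ and $S=S_v$, so $W$ is inclusion-minimal among connected vertex sets containing the terminal set $S$, and $|S|\le d$. Minimality already yields that every vertex of $W\setminus S$ is a cut vertex of $G[W]$, so $G[W]$ has at most $|S|\le d$ non-cut vertices. I would then prove the clean statement that an inclusion-minimal connected set $W$ containing a terminal set $S$ always induces a graph of maximum degree at most $|S|$. The natural route is induction on $|S|$ (the case $|S|\le 1$ being trivial): remove a terminal $s$, pass to an inclusion-minimal connected $W_1\subseteq W$ containing $S\setminus\{s\}$ — for which $G[W_1]$ has maximum degree at most $|S|-1$ by induction — and rebuild $G[W]$ from $G[W_1]$ by adding back the vertices of $W\setminus W_1$. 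The crucial observation should be that an extra neighbour of an added-back vertex would create a short ``bypass'' around a cut vertex of $G[W_1]$, and minimality of $W$ forces every such potential bypass to be blocked by a terminal of $S$, so no vertex can gain more than one extra neighbour per terminal. I expect this degree bookkeeping to be the main obstacle: one must squeeze enough out of inclusion-minimality — the mere consequence ``at most $d$ non-cut vertices'' is too weak, as $G[W]$ can still be dense — for instance by exploiting the shortest-path structure of minimal connected sets, in order to bring the bound down to exactly $d$ rather than some weaker function such as $O(d)$ or $2^{O(d)}$.
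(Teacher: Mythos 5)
Your setup — pinning down at most $d$ contact vertices $S_v$ per cluster and shrinking to an inclusion-minimal connected superset of $S_v$ — is the same preprocessing the paper does, and your verification that $f'$ remains an induced minor model is fine. The gap is exactly where you say you expect one: the claim that an inclusion-minimal connected set $W$ containing a terminal set $S$ has $\Delta(G[W]) \le |S|$ is the crux, and the induction on $|S|$ you sketch does not go through as stated. When you pass to a minimal $W_1 \subseteq W$ for $S \setminus \{s\}$ and then ``add back'' $W \setminus W_1$, there is no bound of the form ``one extra neighbour per terminal''; the vertices of $W \setminus W_1$ can attach to a fixed vertex of $W_1$ in ways that are not controlled merely by the number of terminals removed, and you would also need to argue that $W_1$ can be chosen inside $W$ in a way compatible with the recursion. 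You have correctly reduced the lemma to a true statement, but you have not proved it.

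The missing idea, which is the heart of the paper's proof, is local and does not need induction at all: if some $u \in W$ has more than $|S|$ neighbours in $G[W]$, fix for each terminal $t$ a shortest $u$–$t$ path inside $G[W]$. Each such path uses at most one neighbour of $u$ (its second vertex), so at most $|S|$ neighbours of $u$ lie on these paths. Delete from $W$ every other neighbour of $u$ and keep the component of the resulting graph that contains $u$; the fixed shortest paths survive intact, so all terminals remain in $u$'s component, and the set has strictly shrunk — contradicting minimality of $W$. (The paper phrases this as an iterative pruning of the original cluster rather than as a property of minimal sets, but it is the same argument.) So your decomposition and bookkeeping of terminals are right, and the target claim is true, but you should replace the inductive bypass argument with this shortest-path pruning to close the proof.
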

\begin{proof}
Consider an induced minor model $f$ of $H$ in $G$ and a cluster $G[f(v)]$ for some $v \in V(H)$.
Because the degree of $H$ is at most $d$, we can assign the cluster a set of at most $d$ terminal vertices whose connectivity should be preserved in order to satisfy that $f$ is an induced minor model of $H$ in $G$.
Now, we can remove from the cluster any vertices as long as the terminals stay connected.
In particular, if there is a vertex $u$ with degree $>d$ in $G[f(v)]$, then we can consider the shortest paths from $u$ to the terminals, and remove from $G[f(v)]$ the vertices of $N(u) \cap G[f(v)]$ that do not participate in the shortest paths.
This makes the degree of $u$ in $G[f(v)]$ at most $d$.
\end{proof}

We also need the following lemma.
\begin{lemma}
\label{lem:isbasic}
Let $I$ be an independent set selected uniformly at random from the set of all independent sets of a graph $G$ with maximum degree $d$.
For all $v \in V(G)$ it holds that $\Pr[v \in I] \ge 1/2^{d+1}$.
\end{lemma}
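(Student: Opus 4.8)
The plan is to turn the statement into a purely combinatorial inequality about counts of independent sets. For a graph $H$, write $a(H)$ for the number of independent sets of $H$ (which is at least $1$, since the empty set always counts). First I would split the independent sets of $G$ according to whether they contain $v$: those avoiding $v$ are exactly the independent sets of $G \setminus v$, while deleting $v$ is a bijection between the independent sets of $G$ containing $v$ and the independent sets of $G \setminus N[v]$ (an independent set through $v$ avoids all of $N(v)$, and conversely adding $v$ to any independent set of $G \setminus N[v]$ keeps it independent). Hence
\[
  \Pr[v \in I] \;=\; \frac{a(G \setminus N[v])}{a(G \setminus v) + a(G \setminus N[v])},
\]
so it suffices to prove $a(G \setminus v) \le 2^{d} \cdot a(G \setminus N[v])$, since this yields $\Pr[v \in I] \ge 1/(2^{d}+1) \ge 1/2^{d+1}$.

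To get this bound I would double-count through the map $J \mapsto J \setminus N(v)$, which sends an independent set $J$ of $G \setminus v$ to an independent set of $(G \setminus v) \setminus N(v) = G \setminus N[v]$. Every independent set $J'$ of $G \setminus N[v]$ has at most $2^{|N(v)|} \le 2^{d}$ preimages under this map, because each preimage is of the form $J' \cup S$ with $S \subseteq N(v)$. Summing over all $J'$ gives $a(G \setminus v) \le 2^{|N(v)|} \cdot a(G \setminus N[v]) \le 2^{d} \cdot a(G \setminus N[v])$, as required.

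I do not expect a real obstacle here; the only points to keep in mind are that $a(\cdot)$ is always positive (so the displayed fraction is well defined) and that the inequalities remain valid in the degenerate cases $d = 0$ or $N(v) = \emptyset$, where they read $a(G\setminus v) \le a(G \setminus N[v])$ with equality. In fact the same argument gives the slightly stronger bound $\Pr[v \in I] \ge 1/(2^{d}+1)$, but $1/2^{d+1}$ is all that is needed in the sequel.
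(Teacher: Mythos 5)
Your proof is correct and is essentially the paper's argument in different clothing: the $2^d$-to-one projection $J \mapsto J \setminus N(v)$ is the paper's observation that $\Pr[I \cap N(v) = J] \le \Pr[I \cap N(v) = \emptyset]$ for each of the at most $2^d$ sets $J \subseteq N(v)$, and your bijection between independent sets through $v$ and independent sets of $G \setminus N[v]$ is the paper's step $\Pr[v \in I \mid I \cap N(v) = \emptyset] \ge 1/2$. The only difference is bookkeeping (counts versus conditional probabilities), which incidentally gives you the marginally sharper bound $1/(2^d+1)$.
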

\begin{proof}
For any set $J \subseteq N(v)$ it holds that $\Pr[I \cap N(v) = J] \le \Pr[I \cap N(v) = \emptyset]$ because we can map any independent set $I$ with $I \cap N(v) = J$ into an independent set $I \setminus N(v)$.
Therefore $\Pr[I \cap N(v) = \emptyset] \ge 1/2^d$, so by observing that $\Pr[v \in I \mid I \cap N(v) = \emptyset] \ge 1/2$ we get $\Pr[v \in I] \ge 1/2^{d+1}$.
\end{proof}

Next we finish the proof with similar arguments as in the proof of Theorem~\ref{the:tw1}, but with a different kind of construction of the independent set with the Local Lemma.
In this case the constants involved appear to be impractical.

\begin{lemma}
\label{lem:con_res}
Let $G$ be a graph that contains a graph $H$ with maximum degree $d$ and treewidth $4k$ as an induced minor. Any MWIS-circuit of $G$ has $2^{\Omega(k/(d4^d))}$ gates.
\end{lemma}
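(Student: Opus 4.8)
The strategy is to mirror the proof of Theorem~\ref{the:tw1}, but instead of working with $G$ directly we work through the induced minor model of $H$ in $G$. First I would fix, using Lemma~\ref{lem:lowdg_min}, an induced minor model $f$ of $H$ in $G$ in which every cluster $G[f(v)]$ has maximum degree at most $d$; since $H$ itself has maximum degree $d$, each cluster has at most $d$ "boundary" terminals, and a short argument shows every cluster can be trimmed so it has at most, say, $O(d^2)$ vertices (a tree on $\le d$ terminals plus short connecting paths). Then I would apply Lemma~\ref{lem:twlb} inside $H$: since $H$ has treewidth $4k$, there is a vertex set $X_H \subseteq V(H)$ with no balanced separation of order $<k$. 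Pulling this back through $f$, set $X = \bigcup_{v \in X_H} f(v) \subseteq V(G)$. Apply Lemma~\ref{lem:decomp} to get an MWIS-polynomial of $G$ written as $\sum_{i=1}^\tau g_i \cdot h_i$ with $|\spp(g_i) \cap X| \le 2|X|/3$ and $|\spp(h_i)\cap X| \le 2|X|/3$.

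The key structural point is that each product $g_i \cdot h_i$, via Lemma~\ref{lem:prodpro}, yields a separation $(\spp(g_i), V(G)\setminus\spp(g_i h_i), \spp(h_i))$ in $G$, and this must "cut" many clusters. More precisely, contract this separation back to $H$: a cluster $f(v)$ lands on the $A$-side if $f(v) \subseteq \spp(g_i)$, on the $B$-side if $f(v) \subseteq \spp(h_i)$, and is "hit" otherwise (it contains a vertex of the separator $S_i := V(G)\setminus\spp(g_i h_i)$, or it straddles — but straddling is impossible because $\spp(g_i)$ and $\spp(h_i)$ are non-adjacent while $G[f(v)]$ is connected, so a straddling cluster must meet $S_i$). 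Because the separation is balanced on $X$ and hence roughly balanced on $X_H$ after accounting for the bounded cluster sizes, and $X_H$ admits no order-$<k$ balanced separation in $H$, at least $\Omega(k)$ clusters are hit, so the separator $S_i$ contains at least one vertex from each of $\Omega(k)$ distinct clusters. Let $\F$ be the family $\{S_1,\dots,S_\tau\}$; each $S_i$ meets $\Omega(k)$ clusters. The goal is: if $\tau$ is too small, there is an independent set of $G$ meeting every $S_i$, which no product $g_i h_i$ computes, contradicting that $\sum g_i h_i$ is an MWIS-polynomial.

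For the existence of the hitting independent set I would again use the lopsided Local Lemma (Proposition~\ref{pro:llll}), but with the sampling distribution of Lemma~\ref{lem:isbasic}: sample $I$ uniformly among all independent sets of $G$, which gives $\Pr[v \in I] \ge 2^{-(d+1)}$ for every vertex. The bad events are $\E_{S_i}$: "$I$ misses $S_i$". Since $S_i$ meets $\Omega(k)$ clusters, pick one representative vertex per cluster to get $\Omega(k)$ vertices $u_1,\dots,u_m$ in $S_i$ lying in pairwise-distinct clusters; events "$u_j \in I$" across distinct clusters are negatively correlated in a controlled way — here is where the $4^d$ enters, roughly as the price of decorrelating within the bounded-size clusters — so $\Pr[\E_{S_i}] \le (1-\Omega(2^{-d}))^{\Omega(k)} \le e^{-\Omega(k/2^d)}$, and then one needs only $\tau \le e^{\Omega(k/(d4^d))}$ to satisfy the Local Lemma condition with a uniform choice of $x_i$. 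I expect the main obstacle to be precisely this correlation control: unlike the clean independent-vertex-selection in Lemma~\ref{lem:hitting}, the uniform-independent-set distribution introduces dependencies, and one has to set up the negative dependency graph (edges between $\E_{S_i}$ and $\E_{S_j}$ when their witnessing cluster-representatives are close, or within a shared cluster) carefully and bound its degree, which is where the impractical constants come from. The balance bookkeeping when transferring "balanced on $X$" to "$\Omega(k)$ clusters hit" is a secondary technical nuisance but routine.
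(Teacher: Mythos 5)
There are two genuine gaps. The more serious one is your sampling distribution. You propose to draw $I$ uniformly from all independent sets of $G$ and invoke Lemma~\ref{lem:isbasic} to get $\Pr[v\in I]\ge 2^{-(d+1)}$; but Lemma~\ref{lem:isbasic} requires the \emph{whole} graph to have maximum degree $d$, and the entire point of Lemma~\ref{lem:con_res} is that $G$ may have unbounded degree --- only the clusters and $H$ are degree-bounded. Moreover, a globally uniform independent set has long-range correlations that you cannot package into a bounded-degree negative dependency graph, which is why you correctly anticipate the ``correlation control'' as the main obstacle but have no mechanism to resolve it. The paper's key idea, which you are missing, is a \emph{two-stage local} sampling: select each cluster independently with probability $p=1/(4d2^d)$, and within each selected cluster draw a uniform independent set of that cluster alone (legitimate, since clusters have maximum degree $d$ by Lemma~\ref{lem:lowdg_min}). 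This makes vertices in different clusters exactly independent, so the only dependencies are (i) conflict events $\E_{\{u,v\}}$ for edges $\{u,v\}\in E(H)$ (both adjacent clusters selected, which could make the union non-independent in $G$), whose degree is controlled by the degree bound on $H$, and (ii) pairwise dependencies among the $\E_i$. This is where the $4^d$ actually enters ($p/2^{d+1}=\Theta(1/(d4^d))$ per-vertex success probability), not from ``decorrelating within clusters.''

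The second gap is your choice of $X=\bigcup_{v\in X_H}f(v)$. Lemma~\ref{lem:decomp} then gives a separation balanced with respect to the \emph{union of clusters}, and this does not transfer to a balanced separation of $X_H$ in $H$: if cluster sizes vary, almost all clusters could lie on one side while the vertex count stays below $2|X|/3$. Your patch via an $O(d^2)$ bound on cluster sizes is false in general (connecting $d$ terminals inside $G[f(v)]$ may require arbitrarily long induced paths, and trimming cannot shorten them), and even a uniform size bound $c>1$ would not recover the $2/3$ ratio on cluster counts. The paper avoids this entirely by letting $X$ contain exactly \emph{one} representative vertex from each cluster $f(v)$, $v\in X'$, so that $X$ and $X'$ are in bijection and any balanced separation of $X$ in $G$ whose separator meets fewer than $k$ clusters projects directly to a balanced separation of $X'$ in $H$ of order less than $k$. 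Your observation that a straddling cluster must meet the separator (by connectivity of $G[f(v)]$ and non-adjacency of $\spp(g_i)$ and $\spp(h_i)$) is correct and is indeed the right structural point; it is the balance bookkeeping and the sampling scheme that need to be repaired.
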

\begin{proof}
Let $f$ be the induced minor model of $H$ in $G$.
First, by Lemma~\ref{lem:lowdg_min} we can assume that the maximum degree of each cluster $G[f(v)]$ is at most $d$.
Now, by Lemma~\ref{lem:twlb} we let $X'$ be a vertex set of $H$ that has no balanced separation of order less than $k$.
Then we let $X$ be a vertex set of $G$ created by mapping each $v \in X'$ to an element of $f(v)$.
For each balanced separation $(A,S,B)$ of $X$ in $G$, the set $S$ must intersect at least $k$ different clusters, because otherwise we could map it into a balanced separation of $X'$ of order $< k$ in $H$.
Therefore, by assuming that $G$ has an MWIS-circuit with $\tau$ gates and applying Lemma~\ref{lem:decomp} with the set $X$ we write an MWIS-polynomial of $G$ as $g_1 \cdot h_1 + \ldots + g_\tau \cdot h_\tau$, observing that for each $i$ the set $S_i = V(G) \setminus \spp(g_i \cdot h_i)$ intersects at least $k$ different clusters.
Now it remains to show that if $\tau$ is too small we can construct an independent set of $G$ that intersects $S_i$ for all $i$.

By removing vertices from each $S_i$ we can assume that $S_i$ contains only vertices in clusters, and moreover contains exactly one vertex from each cluster that it intersects.
We use the Local Lemma to construct the independent set.
First we select each cluster independently with probability $p = 1/(4d 2^d)$, and then for each selected cluster $G[f(v)]$ we select an independent set uniformly at random from the set of all independent sets of $G[f(v)]$.
By Lemma~\ref{lem:isbasic} each vertex of $G$ that is in some cluster will appear in the independent set with probability at least $p/2^{d+1}$. 
Vertices in different clusters appear in it independently of each other.

Now our bad events are $\E_{\{u, v\}}$ for all $\{u, v\} \in E(H)$ indicating that both clusters $G[f(u)]$ and $G[f(v)]$ have been selected and $\E_i$ for each $S_i$ indicating that the set $S_i$ does not intersect the independent set.
Our negative dependency graph has edges connecting each $\E_{\{u,v\}}$ to each $\E_i$ such that $S_i$ intersects $f(u)$ or $f(v)$.
It also has all edges between all events $\E_i$ because $\E_i$ and $\E_j$ can be negatively correlated if $S_i$ and $S_j$ intersect a common cluster.

For edges $\{u, v\} \in E(H)$ we let $x_{\{u, v\}} = 1/(15d^2 4^d + 1)$ and for sets $S_i$ we choose $x_i = 1/(20\tau + 1)$.
Now it suffices to verify that
\begin{equation}
\Pr[\E_{\{u, v\}}] = p^2 \le x_{\{u, v\}} (1-x_i)^\tau
\end{equation}
 and
\begin{equation}
\Pr[\E_i] \le (1-p/2^{d+1})^{|S_i|} \le x_i (1-x_i)^\tau (1-x_{\{u, v\}})^{|S_i|d}
\end{equation}
hold whenever $30 \tau \le e^{7|S_i|/(120d 4^d)}$.
We also assume that $d \ge 3$ since if $d \le 2$ then the treewidth of $H$ is at most $2$.

For (3), a lower bound for the right hand side is $e^{-1/20} / (15d^2 4^d + 1)$, which is greater than $p^2 = 1/(16d^2 4^d)$ when $d \ge 2$.
For (4), a lower bound for the right hand side is $x_i e^{-1/20} e^{-|S_i|d/(15d^2 4^d)}$ and an upper bound for the left hand side is $e^{-|S_i|/(8d 4^d)}$, so it holds whenever
$e^{-|S_i|/(8d 4^d)} \le x_i e^{-1/20} e^{-|S_i|d/(15d^2 4^d)}$ holds, which we can simplify to
$e^{|S_i|(1/(15d 4^d)-1/(8d 4^d))} \le x_i e^{-1/20}$,
and finally to
$e^{-7|S_i|/(120d 4^d)} \le 1/(20\tau + 1) e^{-1/20}$, which holds whenever $30 \tau \le e^{7|S_i|/(120d 4^d)}$.
\end{proof}

\section{Treedepth and MWIS-Formulas}
\label{sec:td}
For treedepth we are not aware of linear high-treedepth witnesses similar to what Lemma~\ref{lem:twlb} is for treewidth.
However, it turns out that we can use very basic properties of treedepth decompositions to establish the connection to formula complexity.

Recall that we denote the treedepth of a graph $G$ with $td(G)$.
The following properties follow from the definition of treedepth.

\begin{proposition}
\label{pro:tdbasic}
Let $G$ be a graph with treedepth $td(G)$.
It holds that
\begin{enumerate}
\item $td(G \setminus \{v\}) \ge td(G)-1$ for any $v \in V(G)$ and
\item $td(G)$ is the maximum of $td(G[C])$ over the connected components $C$ of $G$.
\end{enumerate}
\end{proposition}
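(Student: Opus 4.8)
The plan is to derive both statements directly from the definition of a treedepth decomposition, by elementary surgery on rooted forests. For (1) I would prove the equivalent inequality $td(G) \le td(G \setminus \{v\}) + 1$. Starting from an optimal treedepth decomposition $F'$ of $G \setminus \{v\}$, which has depth $td(G \setminus \{v\})$, I would form a rooted forest $F$ on $V(G)$ by adding $v$ as a new root whose children are exactly the roots of $F'$, leaving $F'$ otherwise unchanged. Then $F$ is a treedepth decomposition of $G$: every edge of $G$ not incident to $v$ already had its endpoints in an ancestor--descendant relation in $F'$, and this relation is inherited by $F$; and every edge incident to $v$ is fine because $v$ is an ancestor of every other vertex of $F$. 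Since the longest root-to-leaf path of $F$ is exactly one vertex longer than that of $F'$, the depth of $F$ is $td(G \setminus \{v\}) + 1$, which gives the claim.

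For (2) I would prove the two inequalities $\max_C td(G[C]) \le td(G)$ and $td(G) \le \max_C td(G[C])$ separately, over the connected components $C$ of $G$. The second is easy: take optimal treedepth decompositions of the graphs $G[C]$ and let $F$ be their disjoint union as a rooted forest; since every edge of $G$ has both endpoints in a single component, $F$ is a treedepth decomposition of $G$, of depth $\max_C td(G[C])$. For the first inequality, fix a component $C$, take an optimal treedepth decomposition $F$ of $G$, and define a rooted forest $F_C$ on $V(C)$ in which the parent of a vertex $u \in C$ is its nearest proper $F$-ancestor lying in $C$ (with $u$ a root of $F_C$ if no such ancestor exists). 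The point to verify is that the set of $F_C$-ancestors of any $u \in C$ is exactly the set of $F$-ancestors of $u$ that lie in $C$. Given this, for any edge $\{u,w\}$ of $G[C]$ the endpoint that is an $F$-ancestor of the other remains an ancestor in $F_C$, so $F_C$ is a treedepth decomposition of $G[C]$; and every root-to-leaf path of $F_C$ is obtained from a root-to-leaf path of $F$ by deleting the vertices outside $C$, so the depth of $F_C$ is at most that of $F$. Hence $td(G[C]) \le td(G)$. Combining the two inequalities yields (2).

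I expect no real obstacle here: the arguments are routine manipulations of rooted forests. The only point needing a little care is the verification in (2) that ``projecting'' an optimal decomposition of $G$ onto a component $C$ --- reattaching each vertex of $C$ to its nearest $F$-ancestor inside $C$ --- both preserves the ancestor--descendant condition for edges within $C$ and does not increase the depth; the observation about $F_C$-ancestors above is precisely what makes this work. Degenerate cases, such as $G$ being edgeless or having no vertices, are handled by the usual convention that the empty forest has depth $0$.
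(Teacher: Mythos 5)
Your proof is correct. The paper states this proposition without proof (asserting it ``follows from the definition of treedepth''), and your two constructions --- adding $v$ as a new root over an optimal decomposition of $G \setminus \{v\}$ for (1), and the disjoint-union / nearest-ancestor-projection pair for (2) --- are exactly the standard arguments the paper is implicitly relying on.
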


For our proof we need to introduce two definitions on MWIS-formulas.
We start by defining \emph{typical} independent sets of a partial MWIS-formulas.

\begin{definition2}
Let $F$ be a partial MWIS-formula of a graph $G$.
An independent set $I$ of $G$ is a typical independent set of $F$ if for each multiplication gate $g$ with $td(G[\spp(g)]) \ge td(G)/2$ it holds that $I$ intersects a connected component $C$ of $G[\spp(g)]$ with $td(G[C]) = td(G[\spp(g)])$.
\end{definition2}

Note that by the property~2 of Proposition~\ref{pro:tdbasic} such component indeed exists.

We also define the \emph{separator} $\sep(g)$ of a gate $g$.
Note that an MWIS-formula forms a tree rooted at the output gate, so we will use standard tree terminology (parent, child, ancestor, descendant).

\begin{definition2}
The separator of the output gate $o$ is $\sep(o) = V(G) \setminus \spp(o)$.
The separator of a gate $g$ whose parent $p$ is a multiplication gate is $\sep(g) = \sep(p)$.
The separator of a gate $g$ whose parent $p$ is a sum gate is $\sep(g) = \sep(p) \cup \spp(p) \setminus \spp(g)$.
\end{definition2}

With the definitions of typical independent sets and separators of gates, we can state the following lemma which will be applied with Lemma~\ref{lem:hitting} to prove our lower bound.

\begin{lemma}
Let $G$ be a graph with $td(G) \ge 2$ and $F$ a partial MWIS-formula of $G$.
If $I$ is a typical independent set of $F$ and intersects $\sep(g)$  for each gate $g$ with $|\sep(g)| \ge td(G)/2$, then $F$ does not compute $I$.
\end{lemma}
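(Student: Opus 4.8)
The plan is to argue by contradiction: assume $F$ computes $I$, fix a derivation of the monomial $\prod_{v\in I} v$ in the output polynomial, and locate a gate $g$ appearing in this derivation with $|\sep(g)|\ge td(G)/2$ and $\sep(g)\cap I=\emptyset$. The second property contradicts the hypothesis that $I$ meets every separator of size at least $td(G)/2$. Here a \emph{derivation} of an output monomial $M$ is the subtree of $F$ obtained by descending, at each $+$ gate, into one child whose polynomial contains the current monomial, and at each $\times$ gate $g$ into both children, writing the monomial $M_g$ carried at $g$ as $M_g=M_{g_1}\cdot M_{g_2}$ (this split is forced because $\spp(g_1)$ and $\spp(g_2)$ are disjoint by Lemma~\ref{lem:prodpro}). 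I write $M_g$ for the monomial carried at a gate $g$ on the derivation, and note $\spp(M_g)\subseteq I$ and $\spp(M_g)\subseteq\spp(g)$.

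First I would record two bookkeeping facts. (i) By a short induction on $F$, $\spp(g)\cap\sep(g)=\emptyset$ for every gate $g$; consequently, if $g_1$ is a child of a $+$ gate $g$ then $\sep(g_1)=\sep(g)\cup(\spp(g)\setminus\spp(g_1))$ is a \emph{disjoint} union, so $|\sep(g_1)|=|\sep(g)|+|\spp(g)\setminus\spp(g_1)|$. (ii) By induction along the derivation, from the output gate $o$ downward: $\spp(M_g)=I\cap\spp(g)$ and $\sep(g)\cap I=\emptyset$ for every gate $g$ on the derivation. The base case uses $I\subseteq\spp(o)$ (since $\prod_{v\in I}v$ is a monomial of the output polynomial), giving $\spp(M_o)=I=I\cap\spp(o)$ and $\sep(o)\cap I=(V(G)\setminus\spp(o))\cap I=\emptyset$. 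At a $\times$ gate the children have disjoint supports, so $I\cap\spp(g)=\spp(M_g)=\spp(M_{g_1})\sqcup\spp(M_{g_2})$ meets each $\spp(g_i)$ in exactly $\spp(M_{g_i})$, while $\sep$ is unchanged; at a $+$ gate $M_g=M_{g_1}$, so $I\cap\spp(g)=\spp(M_{g_1})\subseteq\spp(g_1)$, whence the newly added set $\spp(g)\setminus\spp(g_1)$ is disjoint from $I$.

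Then I would walk down the derivation maintaining the potential $\Phi(g):=td(G[\spp(g)])+|\sep(g)|$. The walk starts at $o$ and stops as soon as it reaches a gate $g$ with $td(G[\spp(g)])<td(G)/2$, and otherwise stops at a leaf. While it continues: at a $+$ gate it follows the unique child on the derivation; at a $\times$ gate $g$ (which has $td(G[\spp(g)])\ge td(G)/2$, as the walk has not stopped) it follows, using that $I$ is typical, a child $g_i$ containing a connected component $C$ of $G[\spp(g)]$ with $I\cap C\neq\emptyset$ and $td(G[C])=td(G[\spp(g)])$; since $C\subseteq\spp(g_i)\subseteq\spp(g)$, this $g_i$ satisfies $td(G[\spp(g_i)])=td(G[\spp(g)])$. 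I would first note $\Phi(o)\ge td(G)$: as $\spp(o)=V(G)\setminus\sep(o)$, iterating part~1 of Proposition~\ref{pro:tdbasic} gives $td(G[\spp(o)])\ge td(G)-|\sep(o)|$. Then I would check $\Phi$ is non-decreasing along the walk: at a $+$ gate the support-treedepth $td(G[\spp(g)])$ decreases by at most $|\spp(g)\setminus\spp(g_1)|$ (iterating part~1 of Proposition~\ref{pro:tdbasic}) while $|\sep(g)|$ increases by exactly that amount (fact (i)); at a $\times$ gate $\sep$ is unchanged, and $td(G[\spp(g)])$ is preserved by the choice of child, using that $G[\spp(g)]$ is the disjoint union of $G[\spp(g_1)]$ and $G[\spp(g_2)]$ and hence, by part~2 of Proposition~\ref{pro:tdbasic}, has treedepth $\max(td(G[\spp(g_1)]),td(G[\spp(g_2)]))$.

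At the gate $g$ where the walk stops, $\Phi(g)\ge td(G)$, and either $td(G[\spp(g)])<td(G)/2$ or $g$ is a leaf (so $|\spp(g)|\le1$ and $td(G[\spp(g)])\le1$); using $td(G)\ge2$, in both cases $|\sep(g)|=\Phi(g)-td(G[\spp(g)])\ge td(G)/2$. Since $g$ lies on the derivation, fact (ii) gives $\sep(g)\cap I=\emptyset$, contradicting the hypothesis that $I$ intersects every separator of size at least $td(G)/2$. Hence $F$ does not compute $I$. I expect fact (ii) to be the delicate step — in particular propagating $\sep(g)\cap I=\emptyset$ through $+$ gates, which relies on the exact identity $\spp(M_g)=I\cap\spp(g)$ and on parsing the recursive definition of $\sep$ correctly — together with checking that $td(G[\spp(g)])$ stays at least $td(G)/2$ for as long as the walk must invoke typicality at $\times$ gates.
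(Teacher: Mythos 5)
Your proof is correct and takes essentially the same approach as the paper's: the same potential $td(G[\spp(g)])+|\sep(g)|$, the same use of Proposition~\ref{pro:tdbasic} at sum and product gates, and the same role for typicality in selecting a child of a multiplication gate. The only difference is organizational — you run the argument top-down along one branch of a parse tree of the monomial $\prod_{v\in I}v$, while the paper phrases it as a bottom-up induction showing that every gate with potential at least $td(G)$ is redundant.
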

\begin{proof}
Let $F$ be such a formula and $I$ such an independent set.
We say that a gate $g$ of $F$ is redundant if $F$ computes $I$ if and only if $F$ without $g$ computes $I$.
First, note that all gates $g$ such that $I$ intersects $\sep(g)$ are redundant because by the definition of separator there is an ancestor gate $g'$ of $g$ with a sum gate parent $p$ such that none of the monomials $M$ contributed from $g'$ to $p$ have $M = \prod_{v_i \in I \cap \spp(p)} v_i$, implying that $g'$ is redundant and thus all of its descendants are redundant.

Now, we prove by induction starting from the leaves that every gate $g$ of $F$ for which $|\sep(g)| + td(G[\spp(g)]) \ge td(G)$ holds is redundant.
First, for all such gates $g$ with $td(G[\sup(g)]) \le td(G)/2$, including all leaves, we have that $|\sep(g)| \ge td(G)/2$, making $g$ redundant by our definition of $I$.
For a sum gate $g$ and its child $c$ we have by property~1 of Proposition~\ref{pro:tdbasic} that $td(G[\spp(c)]) \ge td(G[\spp(g)])-|\spp(g) \setminus \spp(c)|$, rearranging to $td(G[\spp(c)] \ge td(G[\spp(g)]) - |\sep(c)| + |\sep(g)|$, and finally to $td(G[\spp(c)]) + |\sep(c)| \ge td(G[\spp(g)]) + |\sep(g)|$. This implies that if $|\sep(g)| + td(G[\spp(g)]) \ge td(G)$ then both children of $g$ are redundant, making $g$ redundant.
For a multiplication gate $g$ with $td(G[\spp(g)]) \ge td(G)/2$ it follows from the typicality assumption that there is a child $c$ of $g$ with $td(G[\spp(c)]) + |\sep(c)| = td(G[\spp(g)]) + |\sep(g)|$ such that $g$ is redundant if $c$ is redundant.
Therefore the induction works, and because $|\sep(o)| + td(G[\spp(o)]) \ge td(G)$ holds for the output gate $o$, the output gate is redundant and therefore the formula does not compute $I$.
\end{proof}

Now the only thing left to complete the proof of Theorem~\ref{the:td} is to show that if a formula has less than $2^{\Omega(td(G)/d)}$ gates then we can construct an independent set that is typical for the formula and intersects $\sep(g)$ whenever $|\sep(g)| \ge td(G)/2$.
For an independent set to be typical it suffices that it intersects $\spp(g)$ for all gates $g$ with $|\spp(g)| \ge td(G)/2$.
Therefore it suffices to apply Lemma~\ref{lem:hitting} with $\F$ consisting of $\sep(g)$ for all $|\sep(g)| \ge td(G)/2$ and $\spp(g)$ for all $|\spp(g)| \ge td(G)/2$. 
This yields a lower bound of $e^{td(G)/(12d)}/12$ for the number of gates.

\section{Optimality of Theorems~\ref{the:tw1} and~\ref{the:td}}
\label{sec:opt}
We show that for each pair $w,d$ we can construct a graph with treewidth $\Omega(w)$ and maximum degree $O(d)$ that admits an MWIS-formula with $d 2^{w/d}$ gates.

If $d > w$ then a $d$-clique does the job.
Otherwise, we take a bounded degree expander with $w/d$ vertices, having treewidth $\Omega(w/d)$, constructible by e.g.~\cite{DBLP:journals/jcss/GabberG81}.
We replace each vertex of the expander with a $d$-clique (which will be referred to as cluster) such that each vertex of a cluster is connected to each vertex of the clusters of the adjacent vertices.
We denote the constructed graph with $G_{w,d}$

\begin{proposition}
The graph $G_{w,d}$ has treewidth $\Omega(w)$, maximum degree $O(d)$, and admits an MWIS-formula with $d 2^{w/d}$ gates.
\end{proposition}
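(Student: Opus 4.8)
The plan is to handle the two cases of the construction separately. If $d > w$ then $G_{w,d} = K_d$, and the three claims are immediate: $K_d$ has treewidth $d-1 \ge w$, maximum degree $d-1 = O(d)$, and its MWIS-polynomial is $\max(0, x_1, \dots, x_d)$, which a binary tree of $\max$-gates computes with $O(d)$ gates (consistent with the target $d\,2^{w/d}$ since here $w < d$). So the interesting case is $d \le w$, where $G_{w,d}$ is the blow-up $G$ of a bounded-degree expander $H$ on $n = w/d$ vertices obtained by replacing each vertex $v$ by a $d$-clique $C_v$ and joining the clusters of adjacent vertices completely. The maximum-degree bound is then also immediate: a vertex of $C_v$ sees the $d-1$ other vertices of $C_v$ and the $d$ vertices of $C_u$ for each of the $O(1)$ neighbours $u$ of $v$ in $H$, hence has degree $O(d)$.

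For the \emph{formula}, I would use that, since each cluster is a clique and adjacent clusters are completely joined, an independent set of $G$ meets each cluster in at most one vertex and the set of clusters it meets is an independent set of $H$, while conversely any such choice gives an independent set of $G$. Hence, writing $y_v = \max_{c \in C_v} x_c$ (a subformula of $d-1$ $\max$-gates over the $d$ inputs of $C_v$), the MWIS-polynomial of $G$ equals the MWIS-polynomial of $H$ evaluated at the variables $y_v$, so it suffices to build an MWIS-\emph{formula} of the $n$-vertex graph $H$ and plug in these gadgets. For $H$ I would use the standard branching identity $\mathrm{MWIS}(H') = \max\!\big(\mathrm{MWIS}(H' \setminus \{v\}),\; y_v \cdot \mathrm{MWIS}(H' \setminus N[v])\big)$ for a fixed $v \in V(H')$, with the empty graph evaluating to the tropical unit $0$: the recursion tree has depth at most $n$ and out-degree $2$, so it unfolds into a tropical \emph{formula} (nothing is reused) with $2^{O(n)}$ recursion nodes, each contributing $O(1)$ combining gates plus the $O(d)$-size gadget for one $y_v$. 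This yields $O(d \cdot 2^{n}) = O(d\,2^{w/d})$ gates in total, matching the stated bound.

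The remaining and, I expect, only non-routine point is the \emph{treewidth lower bound} $tw(G_{w,d}) = \Omega(dn) = \Omega(w)$, which I would prove directly through balanced separators. Since $H$ is an expander, there is a constant $\varepsilon_0 > 0$ such that deleting any set of at most $\varepsilon_0 n$ vertices from $H$ leaves a connected component with more than $0.9n$ vertices. Suppose $tw(G_{w,d}) \le k$; by the standard fact that a graph of treewidth $k$ has a separator $S$ with $|S| \le k+1$ all of whose $G_{w,d} \setminus S$ components have at most $2|V(G_{w,d})|/3 = 2dn/3$ vertices, fix such an $S$ and suppose toward a contradiction that $|S| \le \varepsilon_0 dn/5$. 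Let $T = \{v \in V(H) : |C_v \cap S| \ge d/5\}$; then $|T| \le 5|S|/d \le \varepsilon_0 n$, so $H \setminus T$ has a component $K$ with $|K| > 0.9n$. For every $v \in K$ we have $|C_v \setminus S| > 4d/5$, and since each $C_v \setminus S$ is a nonempty clique and $C_u \setminus S$, $C_v \setminus S$ are completely joined whenever $uv \in E(H[K])$, the union $\bigcup_{v \in K}(C_v \setminus S)$ is connected in $G_{w,d} \setminus S$ and has more than $0.9n \cdot 4d/5 > 2dn/3$ vertices, contradicting the choice of $S$. Hence $k+1 > \varepsilon_0 dn/5$, i.e.\ $tw(G_{w,d}) = \Omega(dn) = \Omega(w)$ since $n = w/d$.

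The main obstacle is thus this treewidth lower bound: it relies on combining the sublinear-separator-free property of expanders with the balanced-separator characterisation of treewidth, and the blow-up constants, while a little delicate, have been picked with slack. (Alternatively, one could invoke a known formula for the treewidth of the lexicographic product $H[K_d]$.) The degree count, the identification of the MWIS-polynomial of $G_{w,d}$ with that of $H$, and the branching formula for the small graph $H$ are all routine.
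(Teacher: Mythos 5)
Your proposal is correct and follows the same overall plan as the paper: blow up a bounded-degree expander on $w/d$ vertices by $d$-cliques, count degrees, and build the formula by taking a $2^{O(w/d)}$-gate MWIS-formula of the expander (the paper says ``a simple recursion,'' which is exactly your branching identity) and substituting a small $\max$-gadget for each cluster. The only place where you genuinely diverge is the treewidth lower bound, and there your argument is more complete than the paper's. The paper disposes of it in one sentence --- ``if a balanced separator contains one vertex from a cluster it must contain all vertices of the cluster'' --- which, read literally, is not true (a separator may meet a cluster in a single vertex, with the rest of the clique lying on one side), and at best gestures at a reduction from balanced separators of $G_{w,d}$ to balanced separators of the expander. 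Your argument supplies the missing step: discard the few clusters that lose at least a $1/5$-fraction of their vertices to the separator, invoke the expander's giant-component property to find a large connected set of surviving clusters, and observe that their union is a single too-large component of $G_{w,d}\setminus S$. This is a correct and self-contained way to get $tw(G_{w,d})=\Omega(w)$ (one could alternatively cite the known treewidth of lexicographic products $H[K_d]$, as you note), and it costs only the routine verification that vertex expansion implies the giant-component property you use.
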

\begin{proof}
The maximum degree is at most $(d+1)$ times the maximum degree of the original bounded degree expander.
The treewidth is $\Omega(w)$ because if a balanced separator contains one vertex from a cluster it must contain all vertices of the cluster.

Note that by a simple recursion any $n$-vertex graph admits an MWIS-formula with at most $2^n$ gates, so the original expander admits an MWIS-formula with $2^{w/d}$ gates.
We can construct an MWIS-formula of $G_{w,d}$ by taking the MWIS-formula of the original expander and replacing each leaf corresponding to a vertex $v$ with a $d$-gate construction computing the maximum over the vertices of the cluster of $v$.
\end{proof}

\section{Conclusions and Future Work}
\label{sec:disc}
We investigated the tropical circuit complexity of maximum weight independent set.
Our initial motivation for this was the fact that lower bounds for tropical circuits imply lower bounds for many actual algorithmic techniques for maximum weight independent set that are widely used in both theory and practice.
We showed that in bounded degree graphs optimal MWIS-circuits are characterized by treewidth and optimal MWIS-formulas are characterized by treedepth.
We generalized the result for MWIS-circuits to apply beyond bounded degree graphs, to a graph class that includes all planar graphs, and more generally all $H$-minor-free graphs.
The constants hidden by the $\Omega$-notation in Theorems~\ref{the:tw1} and~\ref{the:td} are somewhat practical even though we did not specifically optimize them.
For example, Theorem~\ref{the:tw1} shows that any MWIS-circuit of the $5000 \times 5000$-grid has at least $10^{21}$ gates.

We identify some technical barriers for extending the results.
First, we note that Lemma~\ref{lem:hitting} is not effective in graphs with maximum degree higher than $k$: If $|N(v)| \ge k$, we can add $N(v)$ to $\F$ to force the independent set to avoid $v$, essentially forcing us to work with $G \setminus \{v\}$.
Indeed an example of a graph with high treewidth and no small MWIS-circuits for which Lemma~\ref{lem:hitting} is unsuitable is a clique with each edge subdivided.
In some cases, including $H$-minor-free graphs and the subdivided clique, this barrier can be circumvented with Theorem~\ref{the:tw2} by using a bounded degree induced minor with high treewidth.
We also note that our proofs do not exploit the fact that the separators given by Lemma~\ref{lem:decomp} are balanced beyond just the size bound.

The subdivided clique does not exclude any fixed graph as a minor, so the fact that Theorem~\ref{the:tw2} works also for proving a lower bound for it seems to indicate that Theorem~\ref{the:tw2} is more powerful than what is captured by Corollary~\ref{cor:minor}.
We are in fact not aware of graph families for which a $2^{\Omega(w)}$ lower bound can be proved but Theorem~\ref{the:tw2} does not apply.

An interesting general direction for future work could be to prove Corollary~\ref{cor:main} for as large graph classes as possible, starting by extending the $2^{\Omega(w)}$ lower bound as far as possible.
In particular, $H$-topological-minor-free graphs generalize both bounded degree and $H$-minor-free graphs~\cite{DBLP:journals/ejc/NesetrilM11a}, so proving a $2^{\Omega(w)}$ lower bound for them seems like a natural next step.
Even more generally, it could be that such a lower bound could even apply to all bounded degeneracy graphs.
We hope that this line of work will lead to new insights on the structure of independent sets that could even be useful for positive results on algorithms for maximum weight independent set.



\bibliography{paper}

\end{document}